\newtheorem{theorem}{Theorem}
\newtheorem{remark}{Remark}
\newtheorem{problem}{Problem}
\newtheorem{problem*}{Problem}
\newenvironment{proof}{\vspace{.1cm}\noindent{\sc
   Proof.}\hspace{0.10cm}\,\,}{$\hfill\Box$\vspace{.3cm}}
\newcommand{\rline}{{\mathbb R}}
\newcommand{\bbm}[1]{\left[\begin{matrix} #1 \end{matrix}\right]}
\newcommand{\sbm}[1]{\left[\begin{smallmatrix} #1
   \end{smallmatrix}\right]}
\newcommand{\rfb}[1]{\mbox{\rm
   (\ref{#1})}\ifx\undefined\stillediting\else:\fbox{$#1$}\fi}
\newcommand{\bluff}{{\hbox{\raise 15pt \hbox{\hskip 0.5pt}}}}
\newfont{\roma}{cmr10 scaled 1200}
\newtheorem{pro}{Proposition}
\newtheorem{assum}{Assumption}
\newtheorem{deff}{Definition}
\newtheorem{coro}{Corollary}
\title{\LARGE \bf
Dispersion Formation Control: from Geometry to Distribution
}
\author{Jin Chen$^{1}$, Jesus Bautista Villar$^2$, Bayu Jayawardhana$^3$ and Hector Garcia de Marina$^2$ 
\thanks{$^{1}$Jin Chen is with the Institute of Intelligent Vehicle, School of Mechanical Engineering, Shanghai Jiao Tong University, China. 
        {\tt\small chenjin920414@sjtu.edu.cn}}%
\thanks{$^{2}$Jesus Bautista Villar and Hector Garcia de Marina are with Department of Computer Engineer, Automation and Robotics, and with CITIC, the University of Granada, 
Spain.  {\tt\small hgdemarina@ugr.es, jesusbv@ugr.es}}
\thanks{$^{3}$ Bayu Jayawardhana is with the Engineering and Technology Institute Groningen, Faculty of Science and Engineering, University of Groningen, the Netherlands. 
        {\tt\small b.jayawardhana@rug.nl}}
}%
\begin{document}

\maketitle
\thispagestyle{empty}
\pagestyle{empty}

\begin{abstract}
	We introduce and develop the concept of \emph{dispersion formation control}, bridging a gap between shape-assembly studies in physics and biology and formation control theory. In current formation control studies, the control objectives typically focus on achieving desired local geometric properties, such as inter-agent distances, bearings, or relative positions. In contrast, our dispersion formation control approach enables agents to directly regulate the dispersion of their spatial distribution, a global variable associated with a covariance matrix. Specifically, we introduce the notion of \emph{covariance similarity} to define the target spatial dispersion of agents. Building on this framework, we propose two control strategies: a centralized approach to illustrate the key ideas, and a distributed approach that enables agents to control the global dispersion but using only local information. Our stability analysis demonstrates that both strategies ensure exponential convergence of the agents' distribution to the desired dispersion. Notably, controlling a global variable rather than multiple local ones enhances the resiliency of the system, particularly against malfunctioning agents. Simulations validate the effectiveness of the proposed dispersion formation control.
\end{abstract}

\section{Introduction}
In natural sciences, \emph{shape-assembly} pertains to a scientific field that investigates and characterizes the shape of multi-agent systems, with a focus on spatial distribution types, various moments, and the role of self-assembly in achieving specific shapes. Over the past decades, the problem of shape-assembly, where interacting agents form a particular shape through self-assembly, has instigated numerous studies across disciplines such as physics, biology, and engineering. In these studies, different \emph{artificial} confining potentials have been employed to predict spatial patterns of multi-agent systems and to facilitate the self-assembly of individual agents. 

In physics, the distribution of electrons in the Thomson problem \cite{von2012predicting} represents one of the most classic examples of shape-assembly studies. In biology, collective behaviors in self-assembling biological systems have gained significant attention in recent years \cite{vicsek2012collective}. In control theory and engineering, the shape-assembly problem is commonly called formation control \cite{oh2015survey} \cite{sun2023mean}. A key distinction between the notion of shape-assembly used in natural sciences and that in engineering science is that the former one uses natural laws for the network 
with typically complete graph assumption 
while the latter one focuses on practical implementations using non-complete graphs, which promote scalability by accounting for realistic sensing and communication constraints. In the literature on formation control, the desired shape is often defined by inter-agent geometrical constraints, with control problems focusing on designing the sum of local potential functions based on these constraints, as well as distributed control laws to asymptotically steer agents toward the desired minimum of such potentials. According to the different types of constraints and measurements, formation control can be classified into displacement- \cite{jadbabaie2003coordination}, distance- \cite{krick2009stabilisation,sun2016exponential,dimarogonas2008stability}, bearing-\cite{zhao2015bearing,Tri2019bearing,zelazo2015bearing} and angle-based \cite{chen2022simultaneous, chen2021maneuvering,chen2023angle} strategies, respectively. 

In the aforementioned literature, the desired formation has a precise description through the union of multiple local geometric variables, whereas the shape-assembly problem can focus more on a global description of the shape via its spatial distribution.
In various robotics applications, it is enough or even more beneficial for robots to form an ambiguous shape to achieve objectives such as going through a narrow corridor, i.e., without strict local geometrical constraints, the multi-agent system can have better performance in certain (sub)tasks due to the inherent flexibility. 
This motivates us to introduce the concept of \emph{dispersion formation control}, which focuses on the variance of a \emph{non-precise} collective shape and explores self-assembly strategies for individual agents. Dispersion formation control differs significantly from classical formation control methods, particularly in terms of controlled variables since in this paper we focus on an attribute of a distribution, as illustrated in Table \ref{table_example2}. Unlike traditional approaches that focus on controlling multiple local variables, our distribution approach targets the control of a few global variables; specifically, the eigenvalues of the covariance matrix that describes the agents' spatial dispersion. This shift moves the system from a traditional multi-robot framework toward a more swarm-like system, enhancing robustness and resilience against individual agent malfunctions. 

There exists a class of formation control methods known as density-based formation control, which also treats the desired formation shape as a spatial density distribution. However, our proposed dispersion formation control framework fundamentally differs from density-based approaches in terms of problem formulation, underlying assumptions, information availability, and controller design. In density-based formation control \cite{sinigaglia2022density,sinigaglia2025robust,bandyopadhyay2017probabilistic}, the task is typically formulated as an Optimal Control Problem (OCP) governed by partial differential equations (PDEs), where the objective is to minimize the discrepancy between the current agent density and the target density distribution over time. Solving this OCP yields a global vector field, which guides the motion of agents toward the desired formation. This approach generally requires a centralized computational unit that has access to global state information, including both the current and desired densities, in order to compute and transmit the local control inputs or vector fields for individual agents. In contrast, our dispersion formation control framework originates from conventional formation control rather than optimal control formulations. Each agent relies solely on local interactions, using relative position measurements and neighbor-to-neighbor communication to compute its control input. Additionally, the method does not require any knowledge of the current global distribution at any time. Instead, the desired formation shape emerges through distributed feedback control based entirely on local information exchange and errors of covariance matrix. It transforms our system into ordinary differential equations (ODEs) rather than partial differential equations (PDEs) in density-based formation control, significantly reducing the complexity of property analysis and enabling improved performance.


\begin{table*}
\caption{Distinctions among displacement-, distance-, bearing-, angle-based and dispersion formation control}
\label{table_example2}
\begin{center}
\begin{tabular}{|c||c||c||c||c||c||c|}
\hline
\quad &Displacement-\cite{jadbabaie2003coordination} &Distance-\cite{krick2009stabilisation,sun2016exponential,dimarogonas2008stability} &Bearing-\cite{zhao2015bearing,Tri2019bearing,zelazo2015bearing} &Angle-\cite{chen2020angle, chen2021maneuvering,chen2023angle} &Dispersion \\
\hline
Sensed variables &Relative positions &Relative positions &Relative bearings &Inter-edge angles &Relative positions\\
\hline
Controlled variables &Relative positions &Inter-agent distances  &Relative bearings  &Inter-edge angles &Covariance matrix eigenvalues\\
\hline
\end{tabular}
\end{center}
\end{table*}

In terms of paper organization, we begin by introducing the notation, graph theory, the definition of dispersion, and singular perturbation theory in Section II. To better illustrate the concepts behind our approach, we propose two dispersion control laws in Sections III and IV: a centralized control law using a complete graph and a distributed control law based on estimators for the formation's covariance matrix. We also present the exponential stability analysis of both systems in Sections III and IV. Finally, Section V provides numerical simulations validating our results.

\section{Preliminaries}

\subsection{Notation}
Let $\left| \mathcal{S} \right|$ denote the cardinality of a given set $\mathcal{S}$. For a vector $x \in \mathbb{R}^n$, $x^{\top}$ denotes the transpose of $x$ and the 2-norm of $x$ is denoted by $\left\| x \right\| =\sqrt{x^{\top}x}$. 
The $n \times n$ identity matrix is denoted by $I_n$. Additionally, for given matrices $A\in \mathbb{R}^{m\times n}$ and $B\in \mathbb{R}^{p\times q}$, the Kronecker product of $A$ and $B$ is denoted by $A\otimes B\in \mathbb{R}^{mp\times nq}$, and for a given dimension $d=2,3$, we denote $\overline{A}=A\otimes I_d\in \mathbb{R}^{md\times nd}$ where $d$ mostly refers to $d=\{2,3\}$ for the $\{2,3\}$D Cartesian space. We denote by $1_a$ or $0_a\in\mathbb{R}^a$ the column vector with dimension $a\in\mathbb{R}^+$. We denote that a square symmetric matrix is positive definite by $C\succ 0$ and semipositive definite by $C\succeq 0$.


\subsection{Graph theory}\label{subsec:graph}
An undirected graph $\mathcal{G}$ is defined by the tuple $\left( \mathcal{V},\mathcal{E} \right)$, where $\mathcal{V}=\left\{ 1,2,\cdots ,n \right\} $ is the \textit{vertex} set and $\mathcal{E}\subseteq \mathcal{V}\times \mathcal{V}$ is the \textit{edge} set with $\left| \mathcal{E} \right|$ number of edges. 
If $(i,j)\in\mathcal{E}$, the ordered pair $(i,j)$ refers to the edge whose direction is represented by an arrow with node $i$ as the tail and node $j$ as the head. We assume throughout the paper that there is no self-loop in the considered graph $\mathcal{G}$, i.e., $\left( i,i \right) \notin \mathcal{E}$ for all $i\in \mathcal{V}$. 
The set of neighbors of vertex $i$ is denoted by $\mathcal{N}_i\triangleq \left\{ j\in \mathcal{V}\,|\left( i,j \right) \in \mathcal{E} \right\}$. We define the \textit{incidence matrix} $B\in \mathbb{R}^{\left|\mathcal{V}\right| \times \left|\mathcal{E}\right|}$ of $\mathcal{G}$ as
\begin{equation} \label{eq:incidence_matrix}
b_{ik} \triangleq \begin{cases}
	+1 \quad \text{if} \quad i=\mathcal{E}^{\text{tail}}_k\\
	-1 \quad \text{if} \quad i=\mathcal{E}^{\text{head}}_k\\
	0 \quad \text{otherwise},
\end{cases}. 
\end{equation}

For an undirected graph, the \textit{Laplacian matrix} $L \in \mathbb{R}^{\left|\mathcal{V}\right| \times \left|\mathcal{V}\right|}$ \cite[Chapter 6]{bullo2020lectures} is given by $L = BB^\top$. It is known that when the graph $\mathcal{G}$ is connected, then the Laplacian matrix $L$ has a single eigenvalue equals zero \cite[Chapter 3]{bullo2020lectures}, whose associated eigenvector is $\mathbf{1}_N$ as $B^\top\mathbf{1}_N = 0$. 

\begin{figure}[h!]
    \centering 
    \includegraphics[trim={0cm 0cm 0cm 0cm}, clip, width=1\columnwidth]{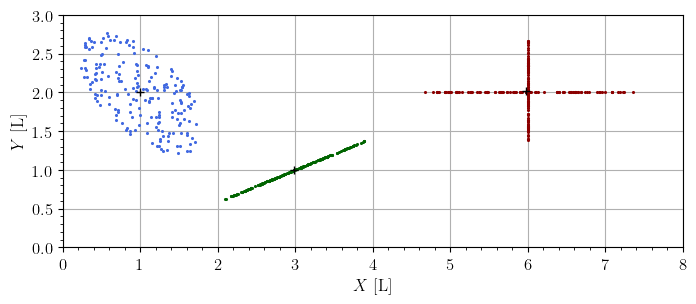}
    \caption{
        {
        This plot shows three frameworks with $n=200$ agents each one. The blue agents under a uniform distribution with eigenvalues $\lambda_1 =0.06$, $\lambda_2 = 0.26$ corresponding to the covariance matrix in \eqref{eq:convariance_matrix}. The red agents exhibit a cross-shaped distribution with the same eigenvalues as the blue ones; hence, $C_{\text{blue}} \sim C_{\text{red}}$. The green agents illustrate a degenerate configuration with $C\succeq 0$, where the distribution collapses to a line with $\lambda_1 =0$, $\lambda_2 = 0.25$.
        } 
    }
    \label{fig: cov}
\end{figure}

\subsection{Reference dispersion}\label{sec:ref_cloud_formation}

Given a finite collection of $n$ points $\left\{ p_i \right\} _{i=1}^{n}$ in $\mathbb{R}^d$ with $n\geqslant 2$ and $d \in \{2,3\}$ typically, a {\it configuration} is the set of agents' position, i.e., $p=\left[ p_{1}^{\top},\dots ,p_{n}^{\top} \right] ^{\top}\in \mathbb{R}^{dn}$, where $p_i \in \mathbb{R}^{d}$ is the position of $i$-th agent. A {\it framework} in $\mathbb{R}^d$ is a tuple $(\mathcal G,p)$, or  $\mathcal G(p)$, where each $i\in \mathcal{V}$ is mapped to $p_i$.

We note that the variance of a distribution captures how disperse is the cloud of points with respect to the average. For a given framework $\mathcal{G}(p)$, 
its covariance matrix $C \in \mathbb{R}^{d\times d}$ is defined by
\begin{equation}\label{eq:convariance_matrix}
    C = \frac{1}{N}\sum_{i=1}^{N} (p_i - p_c)(p_i - p_c)^\top
\end{equation}
where $p_c = \frac{1}{N} \sum_{i=1}^{N} p_i$ is the centroid of the multi-agent system, and, indeed we have that $C \succeq 0$, where the zero eigenvalues correspond to degenerate configurations, e.g., the deployment of agents form a straight line in the 2D plane. Note that the denominator in (\ref{eq:convariance_matrix}) is $\frac{1}{N}$, not $\frac{1}{N-1}$, because we are considering the entire set of points rather than a sample from a larger population. Using the Principal Component Analysis (PCA) \cite{al2021distributed}, we can compute the $d$ unit eigenvectors 
$\{\upsilon_j\}_{j=1,...,d}$, 
which correspond to the eigenvalues $\{\lambda_j\}_{j=1,...,d}$ with an increasing order of magnitude ($0<\lambda_1\leq \ldots \leq \lambda_d$). 
These eigenvectors represent the orthogonal directions of the agents' distribution in the $d$D-space where the corresponding eigenvalues are the variances along the direction of the eigenvectors.

\begin{deff}[Covariance Similarity]\label{Cov_equal}
    The frameworks $\mathcal{G}(p)$ and $\mathcal{G}({p}')$ are called {\it covariance similar} if and only if $C \sim {C}'$ where $\sim$ denotes matrix similarity and the matrices $C$ and ${C}'$ are the covariance matrices from $\mathcal{G}(p)$ and $\mathcal{G}({p}')$,  respectively.
\end{deff}

Since $C$ and ${C}'$ are real symmetric matrices, $C \sim {C}'$ if and only if $C$ and ${C}'$ have the same eigenvalues. We show in Figure \ref{fig: cov} two different frameworks that are covariance similar and one framework with a $C \succeq 0$. For a given arbitrary configuration of interest or reference configuration $p^* = \left[ {p_{1}^{*}}^{\top},\dots ,{p_{n}^{*}}^{\top} \right] ^{\top} \in \mathbb{R}^{nd}$, as an instance of the desired dispersion, 
we can define the reference covariance matrix by
\begin{equation}\label{eq:desired_covariance}
    C^* = \frac{1}{N}\sum_{i=1}^{N} (p_i^{*} - p^*_c)(p_i^{*} - p^*_c)^\top
\end{equation}
where $p^*_i \in \mathbb{R}^d$ is the position of $i$-th agent in the reference configuration and $p^*_c \in \mathbb{R}^d$ is the centroid point. Correspondingly, we define the \emph{desired dispersion} based on 
$C^*$ as follows.

\begin{deff}
\label{def:dispersion}
For a given $C^*\in\mathbb{R}^{d\times d}$, the configuration $p$ with the covariance matrix $C$ is at the \emph{desired dispersion} when it holds $C \sim C^*$
\end{deff}\vspace{0.1cm}

 The set of all configurations $p$ such that $C \sim C^*$ holds is denoted by $\mathcal S$ and is called the {\it set of the desired dispersion}. In this definition, the desired dispersion is parameterized by $C^*$, which is an invariant quantity that is satisfied by all admissible configurations $p$. Note that the set $\mathcal{S}$ 
corresponds to all frameworks which are covariance similar to the reference configuration $p^*$. Accordingly, the \textit{dispersion formation control} problem refers to the design of a controller such that the agents' position $p(t)$ converges to $\mathcal S$ 
as $t \rightarrow \infty$.

\begin{remark}
    Note that the dispersion imposes restrictions only on the covariance matrix $C$ instead of the number of agents. It implies that the desired covariance matrix can be defined by $N_1$ agents, while the desired dispersion can still be achieved with $N_2$ agents. This characteristic enhances the resiliency of the control law as it is allowed to "lose" or "gain" agents during the process.
\end{remark}




\subsection{Singular perturbations}
Later the agents will estimate the global covariance matrix and their centroid  distributively while they run the formation control law simultaneously in a slow-fast fashion. For the sake of convenience, please, allow us to recall the following result.
\begin{theorem}[{\cite[Theorem 11.4]{khalil2002nonlinear}}]\label{SF_theorem}
Consider the following singularly perturbed  systems
\begin{align}\label{fast_system}
	\dot x &= f(t,x,z,\varepsilon) \\
        \varepsilon \dot z &= g(t,x,z,\varepsilon), \label{slow_system}
    \end{align}
where $\varepsilon>0$ is a sufficient small constant, and the functions $f,g$ are smooth. 
Assume that the following conditions are satisfied for all 
    $(t,x,\varepsilon) \in \left[ 0, \infty \right) \times B_r \times \left[0, \varepsilon_0 \right]$:
\begin{itemize}
    \item $f(t,0,0,\varepsilon) = 0$ and $g(t,0,0,\varepsilon) = 0$;
    \item The equation
        $0 = g(t,x,z,0)$
    has an isolated root $z=h(t,x)$ such that $h(t,0) = 0$.
    \item Apply the change of variables by $y = z-h(t,x)$. The functions $f,g,h$, and their partial derivatives up to the second order are bounded for $y\in B_{\rho}$;
    \item The origin of the reduced system
        $\dot x = f(t,x,h(t,x),0)$;
    is exponentially stable;
    \item Assume $\tau = \frac{t}{\varepsilon}$. The origin of the boundary-layer system
    \begin{align*}
        \frac{dy}{d\tau} = g(t,x,y+h(t,x),0)
    \end{align*}is exponentially stable, uniformly in $(t,x)$.
\end{itemize}
\vspace{0.1cm}
Then, there exists $\varepsilon^*>0$ such that for all $\varepsilon<\varepsilon^*$, the origin of \eqref{fast_system}-\eqref{slow_system} is exponentially stable.        
\end{theorem}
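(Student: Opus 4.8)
The statement is a version of Tikhonov's theorem, so the plan is the classical composite-Lyapunov-function argument. \textbf{Step 1: normal form.} Perform the change of variables $y = z - h(t,x)$ already anticipated in the hypotheses. The system becomes
\begin{align*}
\dot x &= f\big(t,x,y+h(t,x),\varepsilon\big), \\
\varepsilon\dot y &= g\big(t,x,y+h(t,x),\varepsilon\big) - \varepsilon\big(\partial_t h + \partial_x h\, f(t,x,y+h(t,x),\varepsilon)\big),
\end{align*}
whose slow manifold is now $\{y=0\}$; the conditions $f(t,0,0,\varepsilon)=g(t,0,0,\varepsilon)=0$ and $h(t,0)=0$ make $(x,y)=(0,0)$ an equilibrium. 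The boundedness of $f,g,h$ and their first two derivatives on $B_r\times B_\rho$ makes the bracketed correction uniformly bounded, and since $f(t,0,0,\varepsilon)=g(t,0,0,\varepsilon)=0$ for all $\varepsilon$, the perturbations $f(\cdot,\varepsilon)-f(\cdot,0)$ and $g(\cdot,\varepsilon)-g(\cdot,0)$ are $O(\varepsilon(\|x\|+\|y\|))$. Finally, $h(t,0)=0$ together with Lipschitzness of $h$ in $x$ gives $\|z\|\le\|y\|+L_h\|x\|$ and $\|y\|\le\|z\|+L_h\|x\|$, so exponential stability of $(x,y)=(0,0)$ will be equivalent to that of the original origin.

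\textbf{Step 2: subsystem certificates.} Invoke converse Lyapunov theorems on the two limiting systems. Exponential stability of the reduced system $\dot x = f(t,x,h(t,x),0)$, whose data are smooth and bounded, yields $V(t,x)$ with $c_1\|x\|^2\le V\le c_2\|x\|^2$, $\partial_t V + \partial_x V\, f(t,x,h(t,x),0)\le -c_3\|x\|^2$ and $\|\partial_x V\|\le c_4\|x\|$. Exponential stability of the boundary-layer system $\mathrm{d}y/\mathrm{d}\tau = g(t,x,y+h(t,x),0)$, \emph{uniform} in the frozen parameters $(t,x)$, yields $W(t,x,y)$ with $b_1\|y\|^2\le W\le b_2\|y\|^2$, $\partial_y W\, g(t,x,y+h(t,x),0)\le -b_3\|y\|^2$, $\|\partial_y W\|\le b_4\|y\|$, plus the parameter-regularity bounds $\|\partial_t W\|\le b_5\|y\|^2$ and $\|\partial_x W\|\le b_6\|y\|^2$ — which is exactly where the second-order boundedness assumption is consumed.

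\textbf{Step 3: composite function, the crux.} Set $\nu(t,x,y) = (1-\theta)V(t,x) + \theta W(t,x,y)$ with $\theta\in(0,1)$ a free parameter. Differentiating along the system of Step 1 and bounding the cross terms produced by the $y$-dependence of $f$, the $x$-dependence of $h$, the $O(\varepsilon)$ perturbations and $\partial_t W$ — and absorbing the remaining cubic remainders by restricting to a small enough ball — one obtains $\dot\nu \le -[\,\|x\|\ \ \|y\|\,]\,M\,[\,\|x\|\ \ \|y\|\,]^\top$ with
\[
M = \begin{bmatrix} (1-\theta)c_3 & -\tfrac12\big((1-\theta)\mu_1 + \theta\mu_2\big) \\ -\tfrac12\big((1-\theta)\mu_1 + \theta\mu_2\big) & \theta\big(\tfrac{b_3}{\varepsilon} - \gamma\big) \end{bmatrix},
\]
where $\mu_1,\mu_2,\gamma\ge 0$ are fixed constants built from $c_4,b_4,b_5$ and the Lipschitz constants of $f$ and $h$. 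The main obstacle is showing $M\succ 0$ for small $\varepsilon$: for any fixed $\theta$ the $(1,1)$ entry is a positive constant, while $\det M = (1-\theta)\theta c_3\big(\tfrac{b_3}{\varepsilon}-\gamma\big) - \tfrac14\big((1-\theta)\mu_1+\theta\mu_2\big)^2$ becomes positive once $b_3/\varepsilon$ exceeds an explicit constant; this defines $\varepsilon^*$, and then $\dot\nu\le -\alpha(\|x\|^2+\|y\|^2)$ for some $\alpha>0$ whenever $\varepsilon<\varepsilon^*$.

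\textbf{Step 4: conclude.} Because $\min((1-\theta)c_1,\theta b_1)(\|x\|^2+\|y\|^2)\le\nu\le\max((1-\theta)c_2,\theta b_2)(\|x\|^2+\|y\|^2)$ and $\dot\nu\le -\alpha(\|x\|^2+\|y\|^2)$, the standard Lyapunov estimate gives exponential decay of $(x(t),y(t))$; Step 1 then transfers it to $(x(t),z(t))$, establishing exponential stability of the origin of \eqref{fast_system}--\eqref{slow_system}. I expect the genuinely tedious part to be the bookkeeping of the coupling constants $\mu_1,\mu_2,\gamma$ and deriving a clean formula for $\varepsilon^*$; the remaining ingredients are converse Lyapunov theory and $2\times2$ positive-definiteness.
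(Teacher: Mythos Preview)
Your proof is the standard composite-Lyapunov argument and is correct in outline; it is exactly how Khalil establishes Theorem~11.4 in \cite{khalil2002nonlinear}. However, the paper does \emph{not} supply its own proof of this statement: the theorem is merely recalled from the literature as a preliminary tool (note the citation in the theorem header and the preceding sentence ``please, allow us to recall the following result''), and is then invoked later in the proof of Theorem~\ref{thm: supermain}. So there is nothing to compare against on the paper's side --- your write-up is a faithful reconstruction of the textbook proof, whereas the paper simply cites it.
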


\subsection{Problem formulation}

Consider the framework $\mathcal{G}(p)$ with $n$ agents in $\mathbb{R}^d$ ($n \geq 2$, $d \in \{2,3\}$), where the agent's position $p_i \in \mathbb{R}^d$ has dynamics
\begin{equation}\label{single-integrator}
    \dot p_i(t) = u_i(t),
\end{equation}
where $u_i(t) \in \mathbb{R}^d$ is the control input to be designed. Let $p_c$ and $C$ denote the centroid and covariance matrix of $\mathcal{G}(p)$, respectively. We denote by $u = \sbm{u_1^\top, \dots , u_n^\top}^\top \in \mathbb{R}^{dn}$ the stacked control input.

Using the notations defined earlier in Subsection \ref{sec:ref_cloud_formation}, consider $\mathcal{G}(p^*)$ as a reference framework in which we can define the reference covariance matrix $C^*$ and the associated set of desired disperse configurations $\mathcal S$. For defining the dispersion error between any position $p$ with its associated covariance matrix $C$ and the \emph{desired dispersion tuple} $(p^*,C^*)$, we can use the spectrum information $\lambda_1 \geq \ldots \geq \lambda_d>0$ and $\lambda^*_1\geq \ldots \geq \lambda^*_d>0$ of $C$ and $C^*$, respectively, as follows 
\begin{equation}\label{error_eigenvalue}
    \begin{aligned}
       e_\lambda(t) := \bbm{e_1(t), \cdots, e_d(t)}^\top  :=\bbm{\lambda_1(t) - \lambda^*_1,  \cdots,  \lambda_d(t) - \lambda^*_d}^\top. 
    \end{aligned}
\end{equation}
\begin{remark}
\label{rem: v1v2}
If we measure the covariance matrices (\ref{eq:convariance_matrix}) and (\ref{eq:desired_covariance}) in barycentric coordinates, i.e., $p_c = 0$, then our error signal \eqref{error_eigenvalue} is invariant up to rotations, since we control eigenvalues and not eigenvectors, and translations of the configuration $p$, which is coherent with Definition \ref{def:dispersion} of desired dispersion.
\end{remark}


\begin{problem}[\bf Dispersion formation control problem] For the multi-agent systems with dynamics \eqref{single-integrator} and for the given dispersion tuple $(p^*,C^*)$, design a control law $u$ such that 
$e_\lambda(t) \rightarrow 0$ as $t \rightarrow \infty$, i.e., $p(t) \rightarrow \mathcal{S}$ as $t \rightarrow \infty$.
\end{problem}

We will focus only on the 2D dispersion-formation problem since the extension to higher dimensions is straightforward.

\section{Dispersion control with a complete graph}


\subsection{Control law}

In this subsection, we study the required information for solving the aforementioned dispersion formation control problem. The sole-use of the local information by an individual concerning its neighbors, such as distances, relative positions or bearings, cannot be linked directly to global characteristics based on the overall distribution of the agents' positions. For simplifying the presentation, we firstly assume in this section that all agents have access to $C$ and the centroid position $p_c$, which can be measured collectively via a global sensor system such as an overhead camera for small robots. In the following section, each agent with only its local information will estimate this global information distributively. Based on the measured $C(t)$, each agent can compute the eigenvectors $\upsilon_1(t)$ and $\upsilon_2(t)$ associated to the eigenvalues $\lambda_1(t)$ and $\lambda_2(t)$. Further, we assume that the local coordinate among all agents is aligned, meaning agents are assumed to share an aligned coordinate frame (e.g., North \& East directions) while they can have different origin points. Note that this assumption is commonly accepted and widely adopted in various formation control strategies, such as bearing-based and angle-based formation control \cite{oh2015survey,zhao2015bearing,chen2022simultaneous}.
Let us define the relative position between agent $i$ and the configuration centroid with the barycentric coordinate $z_i^c :=\bbm{x_i^c \quad y_i^c}^\top = p_i - p_c$.

Using this information, we consider the following dispersion formation control law throughout this section:
\begin{equation}\label{eq:control_law_centrial}
    \dot{p}_i = u_i =-( e_1 \eta_1^i \upsilon_1 + e_2 \eta_2^i \upsilon_2) \quad i \in \mathcal{V},
\end{equation}
where $\eta_1^i = \left \langle z_i^c,\upsilon_1  \right \rangle$ and $\eta_2^i = \left \langle z_i^c,\upsilon_2  \right \rangle$ are the inner products that can be computed with the standard dot product in the Euclidean space, and  $e_1$ and $e_2$ are the dispersion errors in \eqref{error_eigenvalue}. 
Since the covariance matrix $C$ is real symmetric, $\upsilon_1$ and $\upsilon_2$ are orthogonal and unitary, or they can be chosen orthogonal in the case of being a multiple of the identity matrix. If we consider $\upsilon_1$ and $\upsilon_2$ as the basis of 2-D space, it follows by projecting on the eigenvectors the following expression \begin{equation}\label{eq:z_i^c}
\begin{aligned}
    z_i^c = \left \langle z_i^c,\upsilon_1  \right \rangle \upsilon_1 + \left \langle z_i^c,\upsilon_2  \right \rangle \upsilon_2  
    = \eta_1^i \upsilon_1 + \eta_2^i \upsilon_2.
\end{aligned}
\end{equation}
In the following, we will analyse the properties of the multi-agent system under the control law $u_i$ as in \eqref{eq:control_law_centrial}. 

\begin{pro}[Centroid invariance]\label{lemma:Centroid}
    For the multi-agent systems as in \eqref{eq:control_law_centrial}, the centroid position $p_c$ is invariant.  
\end{pro}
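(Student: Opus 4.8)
The plan is to show directly that $\dot p_c(t)\equiv 0$ by summing the control inputs over all agents. Since $p_c = \frac{1}{N}\sum_{i} p_i$ and $\dot p_i = u_i$, we have $\dot p_c = \frac{1}{N}\sum_{i} u_i$, so it suffices to prove $\sum_{i} u_i = 0$ at every time $t$.

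First I would observe that the quantities $e_1, e_2, \upsilon_1, \upsilon_2$ appearing in \eqref{eq:control_law_centrial} are common to all agents: they are functions of the global covariance matrix $C(t)$ (and of $C^*$) only, not of the index $i$. Hence they can be pulled out of the sum over agents, giving
\[
    \sum_{i} u_i = -e_1\,\upsilon_1 \sum_{i} \eta_1^i \;-\; e_2\,\upsilon_2 \sum_{i} \eta_2^i .
\]
Next I would use the definition $\eta_j^i = \langle z_i^c,\upsilon_j\rangle$ together with linearity of the inner product to write $\sum_{i} \eta_j^i = \big\langle \sum_{i} z_i^c,\ \upsilon_j\big\rangle$ for $j = 1,2$. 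The key step is then the elementary identity
\[
    \sum_{i} z_i^c = \sum_{i} (p_i - p_c) = N p_c - N p_c = 0,
\]
which is immediate from the definition $p_c = \frac{1}{N}\sum_i p_i$ of the centroid, i.e. the barycentric coordinates $z_i^c$ are zero-mean by construction. Consequently both $\sum_i \eta_1^i$ and $\sum_i \eta_2^i$ vanish, so $\sum_i u_i = 0$ and therefore $\dot p_c \equiv 0$, proving that $p_c$ is invariant.

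There is no real obstacle here; the statement is a one-line consequence of the zero-mean property of $\{z_i^c\}$. The only point worth noting is that $\upsilon_1, \upsilon_2$ (and $e_1,e_2$) are time-varying, since they track the spectral data of $C(t)$, but this does not affect the argument because at each fixed $t$ they are constants with respect to the index $i$ and can be factored out of the sum. One may also remark, as a sanity check, that the same computation underlies Remark \ref{rem: v1v2}: the control acts purely within the span of the eigenvectors and produces no net drift of the configuration.
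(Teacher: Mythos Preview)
Your proof is correct and follows essentially the same argument as the paper: compute $\dot p_c=\frac{1}{N}\sum_i u_i$, factor out the global quantities $e_1,e_2,\upsilon_1,\upsilon_2$, and use $\sum_i z_i^c=0$ to conclude that the sum vanishes. The additional remark that the time-dependence of $\upsilon_1,\upsilon_2,e_1,e_2$ is irrelevant because they are constant in $i$ at each fixed $t$ is accurate and, if anything, slightly more explicit than the paper's version.
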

\begin{proof}
    It is straightforward to show that the dynamics of the centroid is given by 
\begin{equation}
\begin{aligned}
    \dot{p}_c &= \frac{1}{N} \sum_{i=1}^{N} (e_1 \left \langle z_i^c,\upsilon_1  \right \rangle \upsilon_1 + e_2 \left \langle z_i^c,\upsilon_2  \right \rangle \upsilon_2) \\
    &= \frac{1}{N}  \left(e_1 \left \langle \sum_{i=1}^{N} z_i^c,\upsilon_1  \right \rangle \upsilon_1 + e_2 \left \langle \sum_{i=1}^{N} z_i^c,\upsilon_2  \right \rangle \upsilon_2\right). 
\end{aligned}
\end{equation}
Since $\sum_{i=1}^{N} z_i^c = \sum_{i=1}^{N} (p_i -p_c) = 0$, we can conclude that $\dot p_c = 0$, i.e., the centroid $p_c$ is invariant.
\end{proof}

\begin{pro}[Eigenvector invariance]\label{lemma:Eigenvector}
    For the multi-agent system as in \eqref{eq:control_law_centrial}, the eigenvectors $\upsilon_1$ and $\upsilon_2$ are invariant. 
\end{pro}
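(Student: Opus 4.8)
The plan is to show that the dynamics of the covariance matrix $C(t)$ under the control law \eqref{eq:control_law_centrial} preserves its eigenspaces, so that the eigenvectors $\upsilon_1, \upsilon_2$ cannot rotate. First I would compute $\dot C$. Writing $z_i^c = p_i - p_c$ and using Proposition~\ref{lemma:Centroid} so that $\dot p_c = 0$, we have $\dot z_i^c = \dot p_i = u_i$, hence
\begin{equation*}
\dot C = \frac{1}{N}\sum_{i=1}^N \left( u_i (z_i^c)^\top + z_i^c u_i^\top \right).
\end{equation*}
Substituting $u_i = -(e_1 \eta_1^i \upsilon_1 + e_2 \eta_2^i \upsilon_2)$ and the decomposition $z_i^c = \eta_1^i \upsilon_1 + \eta_2^i \upsilon_2$ from \eqref{eq:z_i^c}, each summand becomes a combination of the rank-one terms $\upsilon_1\upsilon_1^\top$, $\upsilon_2\upsilon_2^\top$, and $\upsilon_1\upsilon_2^\top + \upsilon_2\upsilon_1^\top$, with scalar coefficients built from $\eta_1^i, \eta_2^i, e_1, e_2$.

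The key step is to evaluate the sums $\sum_i (\eta_1^i)^2$, $\sum_i (\eta_2^i)^2$, and $\sum_i \eta_1^i \eta_2^i$. Since $\upsilon_1, \upsilon_2$ is an orthonormal eigenbasis of $C$, we have $C \upsilon_j = \lambda_j \upsilon_j$, and reading off the $(j,k)$ entry of $C$ in this basis gives $\upsilon_j^\top C \upsilon_k = \frac{1}{N}\sum_i \eta_j^i \eta_k^i = \lambda_j \delta_{jk}$. Thus $\frac{1}{N}\sum_i (\eta_1^i)^2 = \lambda_1$, $\frac{1}{N}\sum_i (\eta_2^i)^2 = \lambda_2$, and crucially $\frac{1}{N}\sum_i \eta_1^i \eta_2^i = 0$. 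Plugging these in, the cross terms $\upsilon_1\upsilon_2^\top + \upsilon_2\upsilon_1^\top$ cancel, and one is left with
\begin{equation*}
\dot C = -2 e_1 \lambda_1 \upsilon_1 \upsilon_1^\top - 2 e_2 \lambda_2 \upsilon_2 \upsilon_2^\top,
\end{equation*}
which is diagonal in the basis $\{\upsilon_1, \upsilon_2\}$ at this instant.

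It remains to conclude eigenvector invariance from this structure. Since $\dot C$ is, at every instant, a linear combination of the current spectral projectors $\upsilon_j \upsilon_j^\top$, the matrix $C(t)$ evolves within the commuting family generated by $\upsilon_1\upsilon_1^\top$ and $\upsilon_2\upsilon_2^\top$; equivalently, $C(t)$ and $C(0)$ share a common eigenbasis for all $t$. Concretely, one can verify that $\frac{d}{dt}(C \upsilon_j) = \dot C \upsilon_j = -2 e_j \lambda_j \upsilon_j$ is parallel to $\upsilon_j$, so writing $w_j(t) = C(t)\upsilon_j(0)$ one shows $w_j$ stays parallel to $\upsilon_j(0)$, i.e. $\upsilon_j(0)$ remains an eigenvector of $C(t)$; hence we may take $\upsilon_j(t) = \upsilon_j(0)$. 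The main obstacle is the bookkeeping in the first two steps — carefully expanding the symmetrized outer product $u_i (z_i^c)^\top + z_i^c u_i^\top$ and tracking which terms survive after the orthogonality relations $\sum_i \eta_1^i \eta_2^i = 0$ are applied — but no genuine difficulty arises beyond that, since the cancellation of the off-diagonal block is exactly what the eigenbasis orthogonality delivers. A minor caveat, to be noted as in the control law's own hypotheses, is the degenerate case $\lambda_1 = \lambda_2$ (a multiple of the identity), where the eigenvectors are only determined up to rotation and the statement is read modulo that ambiguity.
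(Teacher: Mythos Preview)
Your proposal is correct and follows essentially the same route as the paper: compute $\dot C$ using centroid invariance, expand via $z_i^c = \eta_1^i\upsilon_1 + \eta_2^i\upsilon_2$, invoke the eigenstructure identities $\frac{1}{N}\sum_i(\eta_j^i)^2=\lambda_j$ and $\sum_i\eta_1^i\eta_2^i=0$ to kill the off-diagonal block, and conclude that $\upsilon_j$ remains an eigenvector. The only cosmetic difference is that the paper phrases the last step via $C(t+\mathrm{d}t)\upsilon_1 = \lambda_1(1-2e_1\mathrm{d}t)\upsilon_1$, whereas you write the full $\dot C = -2e_1\lambda_1\upsilon_1\upsilon_1^\top - 2e_2\lambda_2\upsilon_2\upsilon_2^\top$ and argue eigenspace preservation directly.
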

\begin{proof}
    Let us analyze the invariance property of $\upsilon_1$. For $\upsilon_2$, the proof follows a similar fashion. Since $\upsilon_1(t)$ is the eigenvector corresponding to $\lambda_1(t)$ for $C(t)$, they satisfy 
    \begin{equation}\label{eigen_relationship_C}
    \begin{aligned}
        C(t)\upsilon_1(t) &= \lambda_1(t) \upsilon_1(t).
    \end{aligned}      
    \end{equation}
    Substituting \eqref{eq:convariance_matrix} and \eqref{eq:z_i^c} into \eqref{eigen_relationship_C}, we can obtain
    \begin{equation}
    \begin{aligned}
		\label{eq: 14}
        C\upsilon_1 &= \frac{1}{N}\sum_{i=1}^{N}(\eta_1^i \upsilon_1 + \eta_2^i \upsilon_2)(\eta_1^i \upsilon_1 + \eta_2^i \upsilon_2)^\top \upsilon_1 \\
        &= \frac{1}{N}\left(\sum_{i=1}^{N}{\eta_1^i}^2\right)\upsilon_1 + \frac{1}{N}\left(\sum_{i=1}^{N}\eta_1^i \eta_2^i\right)\upsilon_2 = \lambda_1 \upsilon_1,
    \end{aligned}
    \end{equation}
	recalling that $\upsilon_1^\top\upsilon_2 =0$ and $\upsilon_1^\top\upsilon_1 =1$, and (\ref{eq: 14}) is consistent with the definition of (co)variance: 
 \begin{equation}\label{eq:calculation_relationship}
         \frac{1}{N}\left(\sum_{i=1}^{N}{\eta_1^i}^2\right) = \lambda_1 \quad \text{and} \quad 
         \left(\sum_{i=1}^{N}\eta_1^i \eta_2^i\right) = 0.
     \end{equation}
    Let us now consider $C(t+\mathrm{dt})$ where $\mathrm{dt}$ is the infinitesimal increment of $t$. From the definition of time derivative, it follows that $C(t+\mathrm{dt}) = C(t) + \dot C(t) \mathrm{dt}$,
    where $\dot C(t) = \frac{1}{N}\sum_{i=1}^{N} \left( (\dot p_i - \dot p_c)(p_i - p_c)^\top +  (p_i - p_c)(\dot p_i - \dot p_c)^\top \right).$

    Since $\dot p_c = 0$, and using $\dot p_i$ as in \eqref{eq:control_law_centrial}, we have that
    \begin{equation}\label{eq:C(t+dt)}
    \begin{aligned}
         &C(t+\mathrm{dt})\upsilon_1 = C(t)\upsilon_1 + \dot C(t) \upsilon_1 \mathrm{dt} \\
         &= \lambda_1 \upsilon_1 - \frac{\mathrm{dt}}{N}\left(2\sum_{i=1}^{N}{\eta_1^i}^2e_1\upsilon_1 + \sum_{i=1}^{N}\eta_1^i \eta_2^i(e_1+e_2)\upsilon_2 \right) \\
         &= \lambda_1(1-2e_1\mathrm{dt})\upsilon_1.
    \end{aligned}\nonumber
    \end{equation}
    This equality implies that $\upsilon_1$ is the eigenvector of $C(t+\mathrm{dt})$ with eigenvalue of $\lambda_1(1-2e_1\mathrm{dt})$. In particular, it shows that $\upsilon_1$ remains the eigenvector of $C$ during the transitory motion. Hence, $\upsilon_1$ is invariant. 
\end{proof}


{\begin{remark}
    According to Proposition \ref{lemma:Eigenvector}, the eigenvectors $\upsilon_1$ and $\upsilon_2$ are invariant under the control law \eqref{eq:control_law_centrial}. If $\lambda_1(0) \neq \lambda_2(0)$, the eigenvectors $\upsilon_1$ and $\upsilon_2$ can be computed based on initial covariance matrix $C(0)$. 
    However, when $\lambda_1(0) = \lambda_2(0)$, 
    which represents the case of an uniform radial distribution of agents, 
    we can choose any two orthogonal unit vectors as initial conditions 
    for the control law \eqref{eq:control_law_centrial}. If $\lambda_1(T^*) = \lambda_2(T^*)$ for any $T^* > 0$, then $v_1$ and $v_2$ must be chosen as in $t=0$.

\end{remark}}


\begin{pro}[Collision Avoidance]\label{lemma:collision}
	There are no collisions between agents under the control law \eqref{eq:control_law_centrial} for all $t\geq 0$. 
\end{pro}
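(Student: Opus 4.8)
The plan is to leverage the two invariance facts just proved — Proposition~\ref{lemma:Centroid} (the centroid $p_c$ is fixed) and Proposition~\ref{lemma:Eigenvector} (the eigenvectors $\upsilon_1,\upsilon_2$ are fixed) — to reduce the closed loop to a pair of \emph{decoupled scalar linear} ODEs, one per coordinate in a time-invariant orthonormal frame, and then conclude by uniqueness of solutions. First I would fix the orthonormal pair $\upsilon_1,\upsilon_2$ once and for all at $t=0$ (legitimate by Proposition~\ref{lemma:Eigenvector} together with the remark following it, which dictates how to choose them when $\lambda_1=\lambda_2$), so that $\eta_k^i(t)=\langle z_i^c(t),\upsilon_k\rangle$ is an inner product against a constant vector. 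Since $\dot p_c=0$, we have $\dot z_i^c=\dot p_i$; projecting \eqref{eq:control_law_centrial} onto $\upsilon_1$ and $\upsilon_2$ and using $\upsilon_1^\top\upsilon_2=0$ and $\upsilon_k^\top\upsilon_k=1$ yields
\begin{equation*}
\dot\eta_1^i=-e_1(t)\,\eta_1^i,\qquad \dot\eta_2^i=-e_2(t)\,\eta_2^i,\qquad i\in\mathcal V,
\end{equation*}
where $e_1,e_2$ depend on time only, being common to all agents.

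Next I would translate a collision into this coordinate language: agents $i\neq j$ collide at time $t$ iff $z_i^c(t)=z_j^c(t)$, i.e.\ iff $\delta_k^{ij}(t):=\eta_k^i(t)-\eta_k^j(t)=0$ for \emph{both} $k=1,2$, since $\{\upsilon_1,\upsilon_2\}$ is a basis of $\mathbb R^2$. Each difference obeys the same scalar linear equation $\dot\delta_k^{ij}=-e_k(t)\,\delta_k^{ij}$. The coefficient $e_k(t)=\lambda_k(t)-\lambda_k^*$ is continuous in $t$ (the eigenvalues depend continuously on the continuously differentiable $p$) and bounded, because the eigenvalue dynamics $\dot\lambda_k=-2\lambda_k(\lambda_k-\lambda_k^*)$ extracted in the proof of Proposition~\ref{lemma:Eigenvector} keep each $\lambda_k(t)$ between $\lambda_k(0)$ and $\lambda_k^*$; in particular the closed loop has no finite escape time and solutions are unique. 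Solving explicitly, $\delta_k^{ij}(t)=\delta_k^{ij}(0)\exp\!\big(-\int_0^t e_k(s)\,ds\big)$, and the exponential factor is strictly positive, so $\delta_k^{ij}(t)=0$ iff $\delta_k^{ij}(0)=0$.

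Finally, since the agents start at distinct positions, $z_i^c(0)\neq z_j^c(0)$, hence at least one of $\delta_1^{ij}(0),\delta_2^{ij}(0)$ is nonzero; by the previous step that component remains nonzero for all $t\ge 0$, so $p_i(t)\neq p_j(t)$ for all $t\ge 0$, which is the claim. The one delicate point — and the only real obstacle — is the bookkeeping around the eigenbasis: one must be certain that $\upsilon_1,\upsilon_2$ can legitimately be treated as time-invariant even across instants where $\lambda_1=\lambda_2$, so that the $\eta_k^i$ are genuine coordinates in a fixed frame and the projection step above is valid. This is precisely what Proposition~\ref{lemma:Eigenvector} and its accompanying remark supply, after which the argument reduces to the short uniqueness statement; everything else is routine.
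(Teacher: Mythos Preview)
Your proof is correct and follows essentially the same route as the paper's: both use the centroid and eigenvector invariances to reduce the closed loop to a diagonal scaling of the barycentric coordinates in the fixed $\{\upsilon_1,\upsilon_2\}$ frame, so distinct initial positions remain distinct. The paper phrases this tersely as $p(t)=\overline{T}(t)\,p(0)$ with $T(t)$ a nonzero diagonal scaling matrix in that basis, whereas you make the scaling factors explicit as the strictly positive $\exp\!\big(-\int_0^t e_k(s)\,ds\big)$, which is exactly what justifies the paper's ``nonzero'' assertion.
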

\begin{proof}
	For any given $\mathcal{G}(p)$, the centroid and eigenvectors of the formation with the proposed control law \eqref{eq:control_law_centrial} are invariant according to Proposition \ref{lemma:Centroid} and Proposition \ref{lemma:Eigenvector}. In fact, we are \emph{scaling} continuously the formation along the eigenvectors $v_{\{1,2\}}$. Therefore, for any given time $t\geq 0$, the position $p(t)$ satisfies $p(t) = \overline{T}(t)  p(0)$, where $\overline{T}(t) = T(t)\otimes I_n$ with $T(t) \in \mathbb{R}^{2 \times 2}$ being a non-zero diagonal scaling matrix in the basis $\{v_1,v_2\}$. It implies that $p(t)$ is an affine transformation of $p(0)$; hence, no collisions between agents for all $t \geq 0$.
\end{proof}

Note that if the desired \( C^* \) is positive semidefinite, then it is possible for two or more agents to converge to the same position due to the degeneracy of the desired distribution, for example, 2D agents asymptotically converging to the same point on a 1D \emph{desired line}. Also note that the scaling $\overline T$ along the eigenvectors $v_{\{1,2\}}$ also implies that there is a distribution-type invariance, e.g., if the initial distribution of agents is uniform or normal, it remains uniform or normal for all $t > 0$.

\subsection{Stability analysis}

In this subsection, we will analyze the stability of the multi-agent system with the dispersion formation control law in 
\eqref{eq:control_law_centrial}. For any given $e_\lambda \in \mathbb{R}^2$, let us define the sets $\mathcal{Q}_1, \mathcal{Q}_2$ and $\mathcal{Q}$ by $\mathcal{Q}_1 :=  \{e_\lambda : e_1 = -\lambda_1^*\}$, $\mathcal{Q}_2 := \{e_\lambda : e_2 = -\lambda_2^*\}$ and $\mathcal{Q} := \mathcal{Q}_1 \cup \mathcal{Q}_2$. 

\begin{theorem}[Almost Global Exponential Stability]\label{pro:almost_global}
    Consider the multi-agent system with the given dispersion formation control law \eqref{eq:control_law_centrial}. The trajectory $e_\lambda(t)$ of the closed-loop system converges asymptotically and exponentially to $e_\lambda = 0$ for any $e_\lambda(0) \in \mathbb{R}^2 \backslash \mathcal Q$ if and only if $\lambda^*_{\{1,2\}} > 0$. If at least one $\lambda^*_{\{1,2\}} = 0$ then $e_\lambda = 0$ is just asymptotically stable. 
\end{theorem}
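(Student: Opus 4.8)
The plan is to reduce the closed-loop dynamics to two decoupled scalar ODEs and then carry out a phase-line analysis of each. By Propositions~\ref{lemma:Centroid} and~\ref{lemma:Eigenvector}, the centroid $p_c$ and the eigenvectors $\upsilon_1,\upsilon_2$ are invariant along \eqref{eq:control_law_centrial}, so $\lambda_j(t)=\upsilon_j^\top C(t)\upsilon_j$ may be differentiated directly. The computation already performed inside the proof of Proposition~\ref{lemma:Eigenvector} shows $\dot C(t)\,\upsilon_1 = -2\lambda_1 e_1\,\upsilon_1$ (the $\upsilon_2$-component vanishes by the covariance identity $\sum_i \eta_1^i\eta_2^i=0$), hence $\dot\lambda_1=\upsilon_1^\top\dot C\,\upsilon_1=-2\lambda_1 e_1$, and likewise $\dot\lambda_2=-2\lambda_2 e_2$. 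Since $\lambda_j^*$ is constant, $e_j=\lambda_j-\lambda_j^*$ obeys
\begin{equation}\label{eq:scalar_ode_plan}
  \dot e_j = -2\,e_j\,(e_j+\lambda_j^*), \qquad j=1,2,
\end{equation}
and the two scalar equations are completely decoupled. I would also record the structural facts that $\lambda_j(t)\ge 0$ for all $t$ (it is a variance), i.e.\ $e_j\ge -\lambda_j^*$; that $\mathcal Q_j=\{e_j=-\lambda_j^*\}=\{\lambda_j=0\}$ is precisely the invariant set of degenerate configurations (there $\dot e_j=0$); and that if $\lambda_1(0)=\lambda_2(0)$ the eigenvectors are frozen to the choice made at $t=0$ as in the Remark following Proposition~\ref{lemma:Eigenvector}, so \eqref{eq:scalar_ode_plan} is unambiguous.

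For the ``if'' direction, assume $\lambda_{\{1,2\}}^*>0$ and $e_\lambda(0)\in\mathbb R^2\setminus\mathcal Q$, so $e_j(0)>-\lambda_j^*$, i.e.\ $\lambda_j(0)>0$. Equation \eqref{eq:scalar_ode_plan} is a logistic/Bernoulli equation; integrating it yields
\begin{equation}\label{eq:explicit_plan}
  e_j(t) = \frac{\lambda_j^*\,e_j(0)\,e^{-2\lambda_j^* t}}{\lambda_j^* + e_j(0)\bigl(1-e^{-2\lambda_j^* t}\bigr)},
\end{equation}
whose denominator is bounded below by $\min\{\lambda_j^*,\lambda_j(0)\}>0$ precisely because $\lambda_j(0)=\lambda_j^*+e_j(0)\ne 0$; hence $|e_j(t)|\le c_j\,e^{-2\lambda_j^* t}$ with $c_j$ depending only on $e_j(0)$, and $e_\lambda(t)\to0$ exponentially. (One can avoid the explicit formula: $0$ is the only equilibrium of \eqref{eq:scalar_ode_plan} in the interval $(-\lambda_j^*,\infty)$, its linearization there is $-2\lambda_j^*<0$, and a sign check of $\dot e_j$ on $(-\lambda_j^*,0)$ and $(0,\infty)$ together with a Grönwall estimate shows the whole interval is the basin of attraction with an exponential rate.) This gives almost global exponential stability, the excluded set $\mathcal Q=\mathcal Q_1\cup\mathcal Q_2$ being a measure-zero union of two lines.

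For the ``only if'' direction and the degenerate claim, suppose some $\lambda_j^*=0$, say $\lambda_1^*=0$. Then \eqref{eq:scalar_ode_plan} reduces to $\dot e_1=-2e_1^2$; on the physically realizable half-line $e_1\ge 0$ its solution is $e_1(t)=e_1(0)/(1+2e_1(0)t)$, which is monotone and bounded (hence $e_\lambda=0$ is Lyapunov stable and attractive) but decays only at the algebraic rate $O(1/t)$, so no exponential bound can hold. Picking such an $e_1(0)>0$ (and, if also $\lambda_2^*=0$, a small $e_2(0)>0$) furnishes an initial condition in $\mathbb R^2\setminus\mathcal Q$ with sub-exponential convergence, which proves the converse and shows $e_\lambda=0$ is then only asymptotically stable.

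I expect the only real friction to be bookkeeping rather than analysis: (i) justifying that \eqref{eq:scalar_ode_plan} truly describes the closed loop when the curves $\lambda_1(t)$ and $\lambda_2(t)$ cross — handled by fixing the eigenvector frame once and for all as in the Remark, so the labels $j=1,2$ attach to fixed directions rather than to an ordering; and (ii) being explicit that the statement is read on the physical set $\{\lambda_j\ge0\}$, on which $\mathbb R^2\setminus\mathcal Q$ is exactly $\{e_j>-\lambda_j^*,\ j=1,2\}$, i.e.\ the non-degenerate frameworks, so there is no finite-escape branch to address. Everything else is the one-line integration of the logistic ODE \eqref{eq:scalar_ode_plan}.
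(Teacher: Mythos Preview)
Your proposal is correct and follows essentially the same route as the paper: reduce to the decoupled scalar ODEs $\dot e_j=-2e_j(e_j+\lambda_j^*)$, integrate the logistic equation explicitly, and read off exponential decay for $\lambda_j^*>0$ versus the $O(1/t)$ rate when $\lambda_j^*=0$. Your write-up is in fact slightly more careful than the paper's in spelling out the denominator bound in \eqref{eq:explicit_plan}, the ``only if'' direction, and the eigenvector-labeling issue at crossings.
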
\vspace{0.1cm}

\begin{proof}
	Consider $\lambda^*_{\{1,2\}} > 0$, we have that $\lambda_1 = \frac{1}{N}\left(\sum_{i=1}^{N}{\eta_1^i}^2\right) = \frac{1}{N}\left(\sum_{i=1}^{N}{\left \langle z_i^c,\upsilon_1  \right \rangle}^2\right)$  based on the relation \eqref{eq:calculation_relationship}. Hence $\lambda_1 \geq 0$ and the case $\lambda_1 = 0$ holds if and only if $\eta_1^i = 0$ for all $i \in \mathcal{V}$. Based on \eqref{eq:control_law_centrial}, \eqref{eq:z_i^c}, Proposition \ref{lemma:Centroid} and Proposition \ref{lemma:Eigenvector}, the time-derivative of $\eta_1^i$ satisfies 
\begin{equation}
\begin{aligned}
    \dot \eta_1^i = \left \langle \dot z_i^c,\upsilon_1  \right \rangle + \left \langle  z_i^c,\dot \upsilon_1  \right \rangle = \left \langle \dot p_i,\upsilon_1  \right \rangle = -e_1\eta_1^i,
\end{aligned}
\end{equation}
where we have used the fact that $\dot{\upsilon}_1 = 0$ and $\langle \upsilon_1,\upsilon_2\rangle = 0$. 
Consequently, we can calculate the time-derivative of $\lambda_1$ as follows
\begin{equation}\label{eq:lamda_1_system}
    \begin{aligned}
        \dot \lambda_1 = -\frac{2e_1}{N} \left(\sum_{i=1}^{N}{\eta_1^i}^2\right) = -2e_1\lambda_1.
    \end{aligned}
\end{equation}
The equilibrium points of \eqref{eq:lamda_1_system} are $\lambda_1 = 0$ and $\lambda_1=\lambda_1^*$ (associated to $e_1=0$). It can be checked that the equilibrium point $\lambda_1=0$ is unstable by linearizing \eqref{eq:lamda_1_system} around $\lambda_1=0$
\begin{equation}\label{eq:variational_lam1da_1}
	\dot{\tilde\lambda}_1 = 2\lambda_1^*\tilde\lambda_1,
\end{equation}
which shows that the equilibrium point $\lambda_1=0$ is unstable. 
The same conclusion can be obtained for $\lambda_2$ in 
a similar fashion.

We will now proceed in analyzing the stability of the equilibrium points associated to $\lambda_1=\lambda_1^*$ and $\lambda_2=\lambda_2^*$. Note that, by the definition of $e_\lambda$ in \eqref{error_eigenvalue}, the $e_\lambda$-system is given by
\begin{equation}\label{eq:e_system}
	\dot e_i = -2e_i\lambda_i, \quad i\in\{1,2\}.
\end{equation}
Since $e_i = \lambda_i - \lambda_i^*$, the solution of \eqref{eq:e_system} is given by
\begin{equation}
    e_i(t) = \frac{\lambda_i^*e_i(0)}{e_i(0) + \lambda_i^* - e_i(0)\exp{(-2\lambda_i^*t)}} \exp{(-2\lambda_i^*t)}.
\end{equation}
Note that for $\lambda_i^*=0$ the previous two equilibrium points for each $i=\{1,2\}$ merge and the solution of \eqref{eq:e_system} is
\begin{equation}
\label{eq: eco}
e_i(t) = e_i(0)/(2e_i(0)t + 1),
\end{equation}
and since $e_i(0) \geq 0$ for $\lambda_i^*=0$, we can conclude that $e_\lambda = 0$ is almost-globally exponentially stable if and only if $\lambda^*_{\{1,2\}} > 0$, otherwise just asymptotically stable if and only if at least one $\lambda^*_{\{1,2\}} = 0$.
\end{proof}


\section{Distributed dispersion control}
According to the proposed control law in \eqref{eq:control_law_centrial}, each agent $i$ has to use its barycentric position, and the eigenvalues and eigenvectors from the covariance matrix. Because the barycentric coordinate and the covariance matrix are centralized information, the result presented in the previous section is only applicable for a complete graph. In this section, we will present how to obtain this information by a centroid estimator within the following interation graph.



\vspace{0.2cm}

\begin{assum}\label{graph_auumption}
    The graph is $\mathcal{G}$ in the framework $\mathcal{G}(p)$ is undirected and connected. 
\end{assum}



\subsection{Relative position to centroid estimation}
In this subsection, we introduce the distributed centroid estimator presented in \cite{jesus2024resilient} for solving the source seeking problem\cite{10886464}, which forms the foundation for estimating the covariance matrix distributively. Consider the framework $\mathcal{G}(p)$ with $n$ agents in $\mathbb{R}^d$ ($n \geq 2$ and $d \geq 2$), with $p_i \in \mathbb{R}^d$ denote the position of agent for all $i  \in \{ 1, \cdots n\}$ and with the underlying graph $\mathcal{G} = (\mathcal{V},\mathcal{E})$ being undirected and connected. We assume the local frame of each agent is aligned. 
In this case, the centroid of all $p_i$'s is given by $p_c := \frac{1}{n} \sum_{i=1}^{N} p_i$.
Let us introduce the dynamics of the estimation for the barycenter coordinate $\hat p_i\in\mathbb{R}^d$
\begin{equation}\label{eq:centroid_estimator}
    \dot {\hat p}_i(t) = -\sum_{j \in \mathcal{N}_i} \left ( (\hat{p}_i(t)-\hat{p}_j(t)) - (p_i - p_j) \right ). 
\end{equation}

\begin{coro}[\cite{jesus2024resilient}]\label{centroid_estimation_The}
	For all $i\in\mathcal V$, the estimator (\ref{eq:centroid_estimator}) makes the limit $\lim_{t\to\infty}\left( \hat{p}_i(t)-\frac{1}{n} \sum_i^n \hat{p}_i(0)\right) \to (p_i-p_c)$ and it happens exponentially fast, i.e., if $\sum_i^n \hat{p}_i(0) = 0_d$ then $\hat{p}_i(t)$ converges exponentially fast to the barycenter coordinate $(p_i - p_c)$.
\end{coro}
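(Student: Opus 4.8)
The plan is to recognize \eqref{eq:centroid_estimator} as a linear consensus-type scheme, write it in stacked Laplacian form, and show that its disagreement component decays exponentially at a rate set by the algebraic connectivity of $\mathcal G$. Collecting $\hat p := [\hat p_1^\top, \dots, \hat p_n^\top]^\top$ and $p := [p_1^\top, \dots, p_n^\top]^\top$ and using the incidence matrix $B$ and the Laplacian $L = BB^\top$ from Subsection~\ref{subsec:graph}, the $n$ copies of \eqref{eq:centroid_estimator} read $\dot{\hat p} = -\overline L(\hat p - p)$, where $\overline L = L \otimes I_d$; here $p$ is treated as fixed on the (fast) time scale of the estimator.

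First I would record a conservation law: since $B^\top \mathbf 1_n = 0$ we have $(\mathbf 1_n \otimes I_d)^\top \overline L = 0$, hence $\frac{d}{dt}\left[ (\mathbf 1_n \otimes I_d)^\top \hat p \right] = 0$, so the running average $\frac{1}{n}\sum_{i=1}^n \hat p_i(t)$ stays equal to $\frac{1}{n}\sum_{i=1}^n \hat p_i(0)$ for all $t$. Next I would pass to the error coordinate $y := \hat p - \mathbf 1_n \otimes \left( \frac{1}{n} \sum_i \hat p_i(0) \right) - \left( p - \mathbf 1_n \otimes p_c \right)$, i.e.\ the deviation of $\hat p_i$ from the claimed limit $p_i - p_c$ modulo the conserved average. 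Because the two constant shifts both lie in $\mathrm{span}\{\mathbf 1_n \otimes e_k\}_{k=1}^d = \ker \overline L$ (with $e_k$ the standard basis of $\mathbb R^d$), they are annihilated by $\overline L$, so $\dot y = \dot{\hat p} = -\overline L y$. The conservation law moreover gives $(\mathbf 1_n \otimes I_d)^\top y(t) \equiv 0$, hence $y(t)$ evolves in the disagreement subspace $(\ker \overline L)^\perp$.

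Finally, under Assumption~\ref{graph_auumption} connectedness of $\mathcal G$ makes the zero eigenvalue of $L$ simple with eigenvector $\mathbf 1_n$, so on $(\ker \overline L)^\perp$ the symmetric operator $\overline L$ is positive definite with smallest eigenvalue equal to the algebraic connectivity $\lambda_2(L) > 0$. Taking $V(y) = \frac{1}{2}\| y \|^2$ gives $\dot V = -y^\top \overline L y \le -\lambda_2(L)\, \| y \|^2$ along trajectories, so $\| y(t) \| \le \| y(0) \| e^{-\lambda_2(L) t}$. This is precisely the asserted exponential convergence of $\hat p_i(t) - \frac{1}{n}\sum_j \hat p_j(0)$ to $p_i - p_c$, and for $\sum_j \hat p_j(0) = 0_d$ it reduces to $\hat p_i(t) \to p_i - p_c$ exponentially.

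I do not expect a genuine obstacle, since the statement is essentially classical; the two points that need care are (i) verifying that the intended limit is the equilibrium of the disagreement dynamics that is singled out by the conserved average --- which is exactly what the change of variables to $y$ makes transparent --- and (ii) tracking the $d$-fold Kronecker structure so that positive definiteness ``on the orthogonal complement of the kernel'' is applied to $\overline L$ and not to $L$. If one later wishes to allow $p = p(t)$ (agents moving while estimating), the same computation yields $\dot y = -\overline L y + \zeta(t)$ with $\zeta$ built from the centered velocities, and one would instead invoke exponential input-to-state stability of the $y$-subsystem; but for the static configuration considered in this subsection the Lyapunov estimate above is enough.
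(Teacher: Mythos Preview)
Your argument is correct and is the standard Laplacian-consensus proof: write \eqref{eq:centroid_estimator} as $\dot{\hat p}=-\overline L(\hat p-p)$, observe the conserved average, shift to the error $y$ lying in the disagreement subspace, and use $\dot V\le -\lambda_2(L)\|y\|^2$. The only cosmetic remark is that you could invoke Assumption~\ref{graph_auumption} explicitly where you assert simplicity of the zero eigenvalue, and perhaps note that $\overline L=L\otimes I_d$ inherits the spectrum of $L$ with multiplicity $d$ so that the restriction to $(\ker\overline L)^\perp$ has the same algebraic connectivity bound.

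Regarding the comparison: the paper does \emph{not} supply its own proof of this corollary. It is stated as a result imported from \cite{jesus2024resilient} and used as a black box (the subsequent Proposition~\ref{cor: C} simply invokes it). So there is no alternative argument in the paper to contrast with; your write-up would in fact fill a gap the authors chose to delegate to the cited reference. Your closing paragraph on the time-varying case $p=p(t)$ and exponential ISS is also apt, since that is precisely the regime in which the paper later embeds the estimator via the singular-perturbation analysis of Theorem~\ref{thm: supermain}.
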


Now we will exploit (\ref{eq:centroid_estimator}) for the estimation of the covariance matrix distributively. Let us define the \emph{partial covariance matrix} for each agent $i$ by
\begin{multline} \label{eq:partial_CM}
    C_i := \sbm{ c_1^i& c_2^i\\ 
 c_2^i& c_3^i} := (p_i-p_c)(p_i-p_c)^\top.
\end{multline}
We define the vector $c_i=\sbm{c_{1}^i & c_{2}^i & c_{3}^i}^\top \in \rline^3$ for each agent $i$, where $c_{1}^i,  c_{2}^i, c_{3}^i$ are as in \eqref{eq:partial_CM}. Then, let us define the centroid of $C_i$ by 
$    \bar C := \frac{1}{n}\sum_1^n C_i := \sbm{\Bar{c}_1& \Bar{c}_2\\ 
 \Bar{c}_2& \Bar{c}_3}$. 
According to the definition of $C_i$ in \eqref{eq:partial_CM}, we know that
\begin{equation}\label{eq:C_centroid}
\begin{aligned}
    \bar C = \frac{1}{n}\sum_{i=1}^n (p_i - p_c)(p_i - p_c)^\top = C,
\end{aligned}
\end{equation}
where, indeed, $C$ is the covariance matrix (\ref{eq:convariance_matrix}) for $\mathcal{G}(p)$.

By denoting the centroid of $c_i$ as $\Bar{c} := \frac{1}{n} \sum_1^n c_i$, we know that $\Bar{c} = \bbm{\Bar{c}_1 & \Bar{c}_2 &\Bar{c}_3}^\top$ where $\Bar{c}_1$, $\Bar{c}_2$, $\Bar{c}_3$ are components of $\bar C$ as shown in \eqref{eq:C_centroid}. Since $C = \bar C$, $\Bar{c}_1$, $\Bar{c}_2$, $\Bar{c}_3$ are the components of the covariance matrix $C$ as well. Then, applying the centroid estimator \eqref{eq:centroid_estimator}, we propose the following covariance matrix estimators
\begin{equation}\label{eq:C_M_estimator}
    \dot {\hat{c}}_i(t) = -\sum_{j \in \mathcal{N}_i} \left ( (\hat{c}_i(t)-\hat{c}_j(t)) - (c_i - c_j) \right ), \forall i \in \mathcal{V}
\end{equation}
where $\hat c_i := \bbm{\hat c_{1}^i & \hat c_{2}^i & \hat c_{3}^i}^\top$. Let us denote $\hat{C}_i := \sbm{\hat c_1^i(t)& \hat c_2^i(t)\\
\hat c_2^i(t)& \hat c_3^i(t)}$.

\begin{pro}
	\label{cor: C}
    Suppose $\sum_i^N \hat{c}_i(0) = \mathbf{0}_3$, then by following \eqref{eq:C_M_estimator} we can conclude that $C = C_i - \lim_{t \rightarrow \infty} \hat{C}_i$
exponentially fast.
\end{pro}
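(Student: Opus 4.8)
The plan is to recognize that the covariance–matrix estimator \eqref{eq:C_M_estimator} is nothing but $|\mathcal V|$ copies of the centroid estimator \eqref{eq:centroid_estimator} run in parallel, with the $\mathbb{R}^3$ vectors $c_i$ playing the role of the ``positions'' $p_i$. First I would note that \eqref{eq:C_M_estimator} has exactly the same right-hand side structure as \eqref{eq:centroid_estimator}, only with $d=3$ and with $c_i$ substituted for $p_i$; nothing in Corollary \ref{centroid_estimation_The} (taken from \cite{jesus2024resilient}) restricts $p_i$ to be a geometric position rather than an arbitrary vector in $\mathbb{R}^d$, so the corollary applies verbatim. Consequently, since by hypothesis $\sum_{i=1}^N \hat c_i(0) = \mathbf{0}_3$, we obtain that $\hat c_i(t)$ converges exponentially fast to the ``barycenter coordinate'' $c_i - \bar c$, where $\bar c = \frac1n\sum_{i=1}^n c_i$.

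Next I would transport this conclusion from the vector form back to the matrix form. The map $\sbm{a & b & c}^\top \mapsto \sbm{a & b\\ b & c}$ is a linear bijection between $\mathbb{R}^3$ and the space of symmetric $2\times 2$ matrices, hence continuous with continuous inverse; in particular it commutes with limits and preserves exponential decay rates (up to a fixed multiplicative constant relating the vector norm and the matrix norm). Applying it entrywise to $\hat c_i(t) \to c_i - \bar c$ therefore yields $\hat C_i(t) \to C_i - \bar C$ exponentially fast, where $C_i$ and $\bar C$ are the matrices defined in \eqref{eq:partial_CM} and below it.

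Finally I would invoke the identity \eqref{eq:C_centroid}, already established in the excerpt, namely $\bar C = \frac1n\sum_{i=1}^n (p_i-p_c)(p_i-p_c)^\top = C$. Substituting this into the previous display gives $\lim_{t\to\infty}\hat C_i = C_i - C$ with exponential convergence, which rearranges to $C = C_i - \lim_{t\to\infty}\hat C_i$, as claimed. I do not expect any serious obstacle here: the argument is a direct application of Corollary \ref{centroid_estimation_The} plus the bookkeeping identity $\bar C = C$. The only point requiring a line of care is the passage between the vectorized dynamics $\hat c_i$ and the matrix dynamics $\hat C_i$ — specifically, checking that the symmetry pattern of $C_i$ is respected by \eqref{eq:C_M_estimator} (it is, since $\hat c_2^i$ is a single shared scalar) and that the reshaping preserves the exponential rate — but this is routine.
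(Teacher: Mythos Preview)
Your proposal is correct and follows essentially the same route as the paper: invoke Corollary~\ref{centroid_estimation_The} on the $\hat c_i$ dynamics to obtain $\hat c_i(t)\to c_i-\bar c$ exponentially, then use $\bar C=C$ from \eqref{eq:C_centroid} to rewrite this as $C=C_i-\lim_{t\to\infty}\hat C_i$. Your extra care about the vector-to-matrix reshaping is fine but not strictly needed, since the paper simply identifies the entries of $\hat c_i$ with those of $\hat C_i$ by definition.
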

\begin{proof}
According to Lemma \ref{centroid_estimation_The}, by setting up $\sum_i^N \hat{c}_i(0) = \mathbf{0}_3$, we can derive that $\hat{c}_i(t)$ converges exponentially fast to $c_i - \Bar{c}$ as $t \rightarrow \infty$. Since $c_i$ is known by agent $i$, we can calculate $\Bar{c}$ by $\Bar{c} = c_i - \lim_{t \rightarrow \infty}\hat{c}_i(t)$. Therefore, the covariance matrix can be computed distributively by $C = C_i - \lim_{t \rightarrow \infty}\hat{C}_i$.
\end{proof}

\tikzstyle{block} = [draw, fill=white, rectangle, 
    minimum height=3em, minimum width=3em, text width=2.5cm, align=center]
\tikzstyle{block_small} = [draw, fill=white, rectangle, 
    minimum height=3em, minimum width=3em, text width=2cm, align=center]
\tikzstyle{block_kin} = [draw, line width=0.5mm, fill=black!10, rectangle, 
    minimum height=3em, minimum width=6em, text width=3cm, align=center]
\tikzstyle{block_nei} = [draw, line width=0.5mm, fill=blue!10, rectangle, 
    minimum height=3em, minimum width=3em, minimum height=1cm, text width=3cm, align=center]
\tikzstyle{block_cas} = [draw, line width=0.5mm, fill=gray!10, rectangle, 
    minimum height=3em, minimum width=7cm, minimum height=1.5cm, text width=6cm, align=center]
\tikzstyle{block_sys} = [draw, dashed, line width=0.5mm, fill=white, rectangle, 
    minimum height=3em, minimum width=12.0cm, minimum height=2.0cm, text width=6cm, align=center]

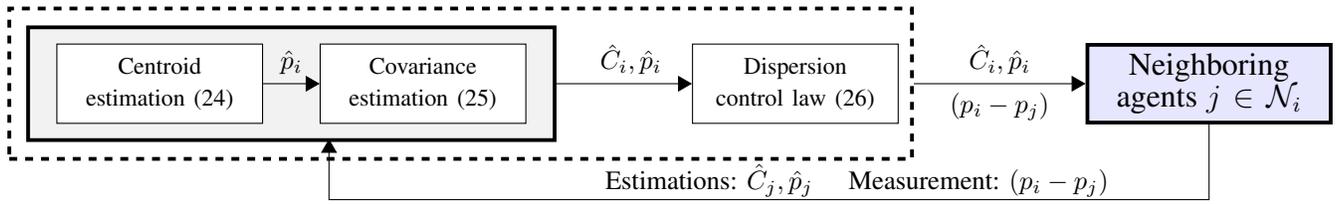
\begin{figure*}[t]
  \centering
  \begin{tikzpicture}[auto, node distance=2cm, >=triangle 60]
    \node [block_sys, shift={(4.5,0)}] (system) {};
    \node [block_cas, shift={(2.25,0)}] (cascade) {};
    \node [block, shift={(0.5,0)}] (centr) {{\small Centroid estimation \eqref{eq:controller_1}}};
    \node [block, right of=centr, node distance=3.5cm] (cov) 
            {\small Covariance estimation \eqref{eq:controller_2}};
    \node [block, right of=cov, node distance=4.95cm] (disp) 
            {\small Dispersion control law \eqref{eq:controller_3}};

    \node [block_nei, right of=disp, node distance=5.5cm] (nei) 
            {{\large Neighboring agents $j\in\mathcal{N}_i$}};
    
    \draw [->] (centr) -- node[] {$\hat p_i$} (cov);
    \draw [->] (cascade) -- node[pos=0.55] {$\hat C_i, \hat p_i$} (disp);
    \draw [->] (system)  --  node [] {$\hat C_i, \hat p_i$} node[below] {$(p_i - p_j)$} (nei);
    \draw [->] (nei.south)  -| ++(0,-1) -|  node [pos=0.2, yshift=0.6cm] {Estimations: $\hat C_j,\hat p_j$ \quad Measurement: $(p_i - p_j)$} ($(centr.south) + (2.25,-0.2)$);   
  \end{tikzpicture}
	\caption{Block diagram of the closed loop system \eqref{eq:controller_1}-\eqref{eq:controller_3} for all agent $i\in\mathcal{V}$. The \emph{superblock} with dashed line corresponds to the individual $i$'th agent, and it is replicated by its neighboring agents $j\in\mathcal{N}_i$. Each agent shares with its neighbors the estimations of its centroid $\hat p_i$ and its covariance $\hat C_i$. The motion of the agents due to the dispersion control law \eqref{eq:controller_3} makes the variation of the relative position $(p_i - p_j)$ that it is measured by the agents associated with the edge $(i,j)\in\mathcal{E}$. Note that the estimation of the centroid and covariance is in cascade (within the gray color block), and the estimators are in closed loop with the motion of the agents \eqref{eq:controller_3} due to the inputs measurement $(p_i-p_j)$ and their estimations $\hat p_j$ and $\hat C_j$.}
  \label{fig_control}
\end{figure*}


By defining $C_i - \hat{C}_i$, we conclude that $C_i - \hat{C}_i$ will converge exponentially fast to the covariance matrix $C$ as $t \rightarrow \infty$. Let us denote the eigenvalues of $C_i - \hat{C}_i$ as $\lambda_1^i$, $\lambda_2^i$ and the eigenvectors of $C_i - \hat{C}_i$ as $\upsilon_1^i$, $\upsilon_2^i$. Consequently, $\lambda_1^i$, $\lambda_2^i$, $\upsilon_1^i$, $\upsilon_2^i$ will respectively converge exponentially fast to $\lambda_1$, $\lambda_2$, $\upsilon_1$, $\upsilon_2$ as $t \rightarrow \infty$.

\begin{remark}
Note that neither (\ref{eq:centroid_estimator}) nor (\ref{eq:C_M_estimator}) depend on the total number of agents, nor does the control law (\ref{eq:control_law_centrial}) for reaching $C^*$. That is, the distributed integral solution has the potential to be resilient to variations in the total number of agents, since this information is not required.
\end{remark}

\subsection{Distributed control law and stability analysis}
In this subsection, we present our fully distributed control law so that we can control the dispersion of a multi-agent system. Our method can be separated in two steps that are in closed loop. Firstly, the agents use estimators \eqref{eq:centroid_estimator} and \eqref{eq:C_M_estimator} in cascade to distributively acquire their barycentric coordinate and the global covariance matrix. Subsequently the control law \eqref{eq:control_law_centrial} is applied using the estimated variables instead of centralized variables as inputs. Note that the motion of the agents influence both estimated values; hence, the closed loop system. However, the interconnection of the three algorithms do not happen with the same time rate but they need to be adjusted. In the following, these are the three algorithms that run at each agent $i\in\mathcal{V}$:
\begin{align}\label{eq:controller_1}
        \varepsilon_{\text{f}}\varepsilon_{\text{s}}\dot {\hat{p}}_i &= -\sum_{j \in \mathcal{N}_i} \left ( (\hat{p}_i-\hat{p}_j) - (p_i - p_j) \right ) \\
        \varepsilon_{\text{s}}\dot {\hat{C}}_i &= -\sum_{j \in \mathcal{N}_i} \left ( (\hat{C}_i-\hat{C}_j) -  (\hat{p}_i\hat{p}_i^\top - \hat{p}_j\hat{p}_j^\top) \right ) \label{eq:controller_2}\\ 
        \dot{p}_i &=-( e_1^i\left \langle z_i^c,\upsilon_1^i  \right \rangle\upsilon_1^i + e_2^i \left \langle z_i^c,\upsilon_2^i  \right \rangle \upsilon_2^i), \label{eq:controller_3}
\end{align}
where $\sum_i^N \hat{p}_i(0) = \mathbf{0}_2$, $\sum_i^N \hat{C}_i(0) = \mathbf{0}_{2 \times 2}$, $e_1^i = \lambda_1^i - \lambda_1^*$, $e_2^i = \lambda_2^i - \lambda_2^*$, $z_i^c = \hat{p}_i$, $\varepsilon_{\text{f}}, \varepsilon_{\text{s}} > 0$ are two sufficiently small constants to tune the slow-fast system, and $\lambda_1^i$, $\lambda_2^i$, $\upsilon_1^i$, $\upsilon_2^i$ are estimated eigenvalues and eigenvectors. As it is depicted in Figure \ref{fig_control}, we can identify that \eqref{eq:controller_1} estimates the barycentric coordinate of agent $i$ (the fastest dynamics) in cascade with \eqref{eq:controller_2} that estimates the global covariance matrix (the second fastest dynamics); and finally, in closed loop with the estimators we have the dynamics \eqref{eq:controller_3}, which is the motion that agent $i$ must follow (the slowest dynamics). This is a standard class of a slow-fast system as analyzed in  \cite{qin2021partial, maghenem2023singular}, and the exponential stability of \eqref{eq:controller_1} and \eqref{eq:controller_2} for the static case $\dot p = 0$, and the exponential stability of \eqref{eq:controller_3} for a desired dispersion $C^* \succ 0$ will assists us with seeking upperbounds for $\varepsilon_{\text{f}}$ and $\varepsilon_{\text{s}}$ {\cite[Theorem 11.4]{khalil2002nonlinear}}. In particular, note that since \eqref{eq:controller_1} and \eqref{eq:controller_2} are in cascade, the constant $\varepsilon_f$ is irrelevant for the analysis of the closed loop; however, it is convinient to set it small in practice to have a faster estimation of the centroid by \eqref{eq:controller_1} that can assist with a reliable estimation of the covariance matrix by \eqref{eq:controller_2}, i.e., the \emph{perturbation} to \eqref{eq:controller_2} will vanish much faster.

\begin{figure*}[h!]
    \centering 
    \includegraphics[trim={0cm 0cm 0cm 0cm}, clip, width=1.55\columnwidth]{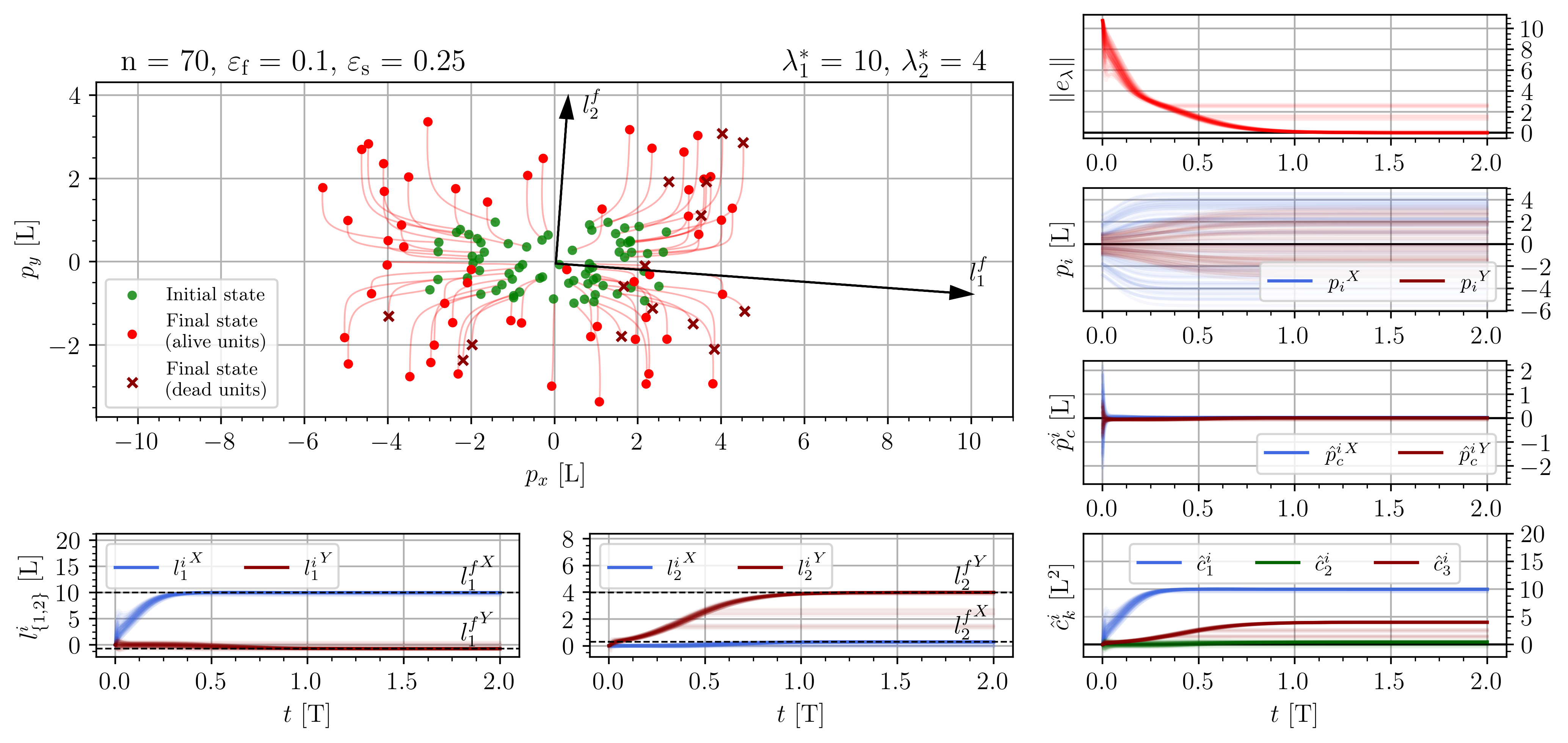}
    \caption{
        {
        The initial (green points) and final (red points) position of the agents are shown, along with their 2D trajectories (red lines). Black arrows represent the final eigenvalues and normalized eigenvectors of $C_i$ for $i=1$ as $l_{\{1,2\}}^f$, where $l_{\{1,2\}}^i = \lambda_{\{1,2\}}^i (v_{\{1,2\}}^i \, / \, \|v_{\{1,2\}}^i\|)$. The two plots below show how all the agents converge to such eigenvalues and eigenvectors.
        On the right, from top to bottom, the time evolution for each agent of $\|e_\lambda\|$, which converges exponentially fast to zero as $t \rightarrow \infty$; the agents' position $p_i$, demonstrating collision avoidance as their trajectories never intersect; and the estimated centroid and covariance matrix coefficients, based on $\varepsilon_{\text{f}} = 0.1$ and $\varepsilon_{\text{s}} = 0.25$. When an agent stops working (crosses), we keep its last value with a light red color.
        } 
    }
    \label{fig: distributed}
\end{figure*}

\begin{figure*}[h!]
    \centering 
    \includegraphics[trim={0cm 0cm 0cm 0cm}, clip, width=1.55\columnwidth]{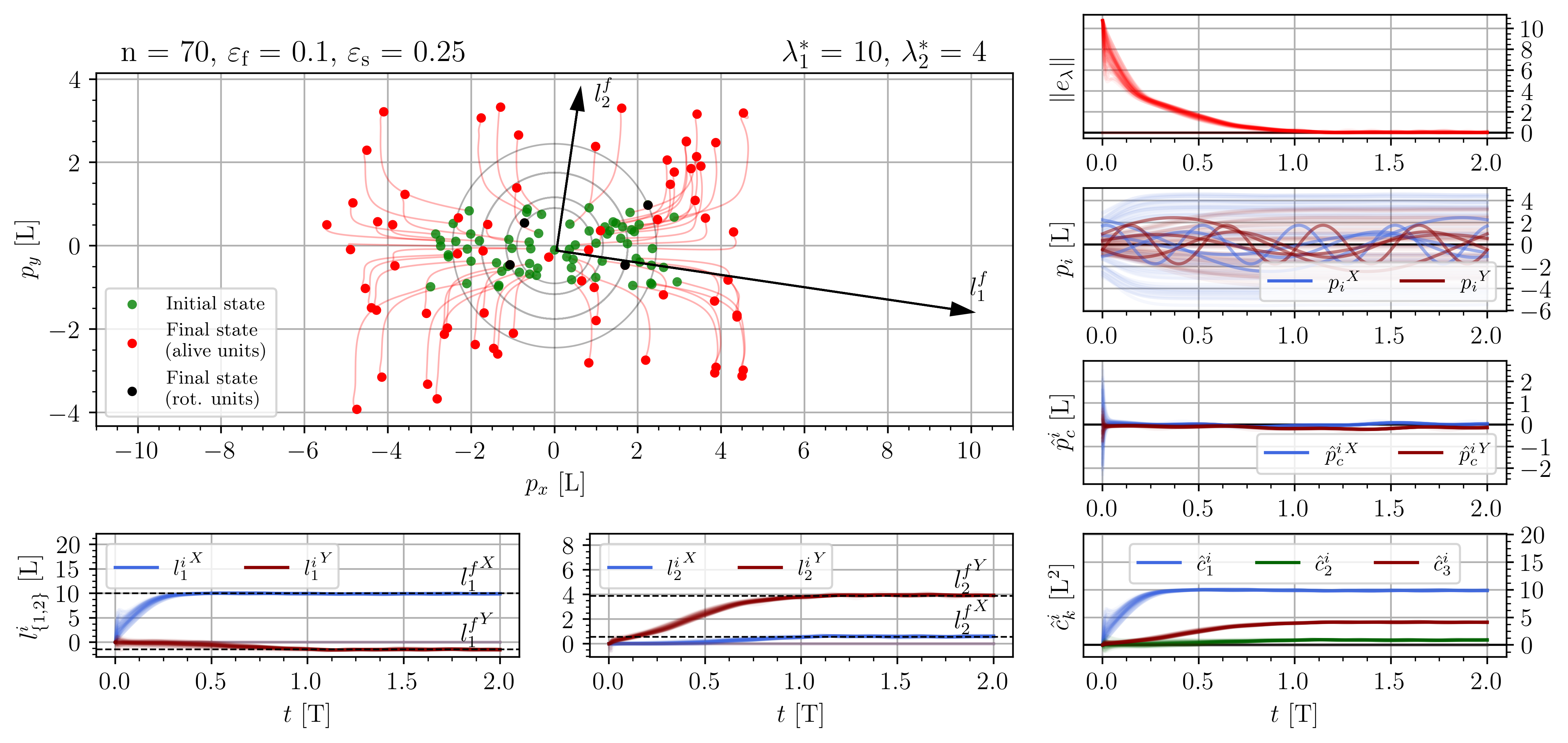}
    \caption{
        {
        Similar to Figure \ref{fig: distributed}, this figure shows the agents’ trajectories and convergence properties. However, in this case, 4 agents (black points) operate independently from the rest of the team, following circular trajectories with angular speed $\omega = \{4\pi, 2\pi, \frac{4}{3}\pi, \pi\}$ rad/s, while remaining connected to the communication graph. The right plot representing $p_i(t)$ shows the rotating agents with a larger alpha.
        } 
    }
    \label{fig: distributed_robustness}
\end{figure*}

\begin{theorem}
\label{thm: supermain}
 
	Consider a team of $n$ agents under a connected and undirected graph $\mathcal{G}$ with a desired dispersion given by $C^* \succ 0$ and the initial $C(0) \succ 0$. If $\varepsilon_{\text{f}}, \varepsilon_{\text{s}} > 0$ are sufficiently small, then the closed-loop \eqref{eq:controller_1}-\eqref{eq:controller_3} makes $\lim_{t\to\infty} C(t) \sim C^*$.
    
\end{theorem}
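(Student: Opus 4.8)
The plan is to recast the closed loop \eqref{eq:controller_1}--\eqref{eq:controller_3} as a singularly perturbed system with a three-time-scale hierarchy --- the centroid estimator \eqref{eq:controller_1} being the fastest, the covariance estimator \eqref{eq:controller_2} the intermediate, and the motion law \eqref{eq:controller_3} the slowest --- and then invoke Theorem~\ref{SF_theorem}, in the partial-state form of \cite{qin2021partial,maghenem2023singular}, since we only need the \emph{global} variable $e_\lambda$ to vanish and not the whole configuration $p$, which converges merely to the set $\mathcal S$. Because \eqref{eq:controller_1} feeds \eqref{eq:controller_2} in cascade, the combined fast subsystem is exponentially stable for every fixed $\varepsilon_{\text{f}}>0$, so in the end only $\varepsilon_{\text{s}}$ must be taken small; $\varepsilon_{\text{f}}$ small is used only to sharpen the transients. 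Throughout, $C(0)\succ 0$ guarantees $\lambda_{\{1,2\}}(0)>0$, so the eigendecomposition is well posed (the degenerate direction $\lambda_1=\lambda_2$ being covered by the Remark after Proposition~\ref{lemma:Eigenvector}) and the initial error lies outside the bad set $\mathcal Q$ of Theorem~\ref{pro:almost_global}, while $C^*\succ 0$ yields \emph{exponential} (not merely asymptotic) stability of the reduced flow.

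\emph{Step 1 --- error coordinates and boundary layers.} First I would introduce the estimation errors $\tilde p := \hat p - (p - \mathbf 1_n\otimes p_c)$ and $\tilde C$, the stacked discrepancy between $\hat C_i$ and $C_i - C$; the initializations $\sum_i \hat p_i(0)=0$, $\sum_i \hat C_i(0)=0$ keep them in the corresponding zero-mean subspaces. Freezing $p$ and passing to the fast time, \eqref{eq:controller_1} becomes the consensus-type linear system $\dot{\tilde p} = -\overline L\,\tilde p$, which by Assumption~\ref{graph_auumption} and Corollary~\ref{centroid_estimation_The} is exponentially stable on that subspace; once $\tilde p=0$ one has $\hat p_i\hat p_i^\top = C_i$, so \eqref{eq:controller_2} reduces to the analogous integrator $\varepsilon_{\text{s}}\dot{\hat C}_i = -\sum_{j\in\mathcal N_i}\big[(\hat C_i-\hat C_j)-(C_i-C_j)\big]$, which by Corollary~\ref{centroid_estimation_The} and Proposition~\ref{cor: C} drives $C_i-\hat C_i\to C$ exponentially, hence $\lambda_{\{1,2\}}^i\to\lambda_{\{1,2\}}$ and $\upsilon_{\{1,2\}}^i\to\upsilon_{\{1,2\}}$. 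The cascade of these two exponentially stable linear boundary layers is itself exponentially stable, uniformly in the slow state (the $\tilde p$-to-$\tilde C$ coupling term $\hat p_i\hat p_i^\top - C_i$ is quadratic in $\tilde p$ and vanishes with it), which is the boundary-layer hypothesis of Theorem~\ref{SF_theorem}.

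\emph{Step 2 --- the reduced system.} On the slow manifold $\tilde p = 0$, $\tilde C = 0$ we get $\hat p_i = p_i - p_c$ and $C_i - \hat C_i = C$, so $\lambda_j^i=\lambda_j$, $\upsilon_j^i=\upsilon_j$, $z_i^c = p_i-p_c$, and \eqref{eq:controller_3} collapses \emph{exactly} to the centralized law \eqref{eq:control_law_centrial}. Then Propositions~\ref{lemma:Centroid}--\ref{lemma:Eigenvector} make the centroid and eigenvectors invariant, and the induced $e_\lambda$-dynamics is \eqref{eq:e_system}, $\dot e_i = -2 e_i\lambda_i$ with $\lambda_i = e_i + \lambda_i^*$. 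Since $\lambda_{\{1,2\}}^* > 0$, since $\lambda_{\{1,2\}}(0)>0$ gives $e_i(0) = \lambda_i(0)-\lambda_i^*\neq -\lambda_i^*$ so $e_\lambda(0)\notin\mathcal Q$, and since $\lambda_i(t)>0$ is preserved along \eqref{eq:e_system}, Theorem~\ref{pro:almost_global} gives exponential stability of $e_\lambda = 0$ for the reduced system, with rate $2\lambda_i^*$ from the linearization --- the reduced-system hypothesis of Theorem~\ref{SF_theorem}. Note also that at $e_\lambda=0$ one has $\dot p=0$ in \eqref{eq:controller_3}, so the slow dynamics genuinely rests on $\mathcal S$.

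\emph{Step 3 --- assembly, and the hard part.} After also checking the remaining regularity/boundedness requirements of Theorem~\ref{SF_theorem} --- restricting attention to a compact positively invariant neighbourhood of the reduced flow so that $p(t)$ stays bounded and $C(t)\succ 0$ stays bounded away from the singular set where $\lambda_1=\lambda_2$ or some $\lambda_i=0$, which is where the maps $C\mapsto(\lambda_j,\upsilon_j)$ fail to be smooth --- Theorem~\ref{SF_theorem} yields $\varepsilon_{\text{f}}^{*},\varepsilon_{\text{s}}^{*}>0$ such that for $\varepsilon_{\text{f}}<\varepsilon_{\text{f}}^{*}$, $\varepsilon_{\text{s}}<\varepsilon_{\text{s}}^{*}$ the origin $e_\lambda = 0$ is exponentially stable for \eqref{eq:controller_1}--\eqref{eq:controller_3}; equivalently $\lambda_{\{1,2\}}(t)\to\lambda_{\{1,2\}}^{*}$, i.e.\ $C(t)\sim C^*$. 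The hard part will be that $e_\lambda$ is \emph{not} a self-contained slow state off the manifold: it is a nonlinear functional of $p$ whose ``unobserved'' directions (centroid drift, the eigenvector frame) re-enter $\dot e_\lambda$ through the estimation errors, so the origin-based Theorem~\ref{SF_theorem} must be used in its partial-state / set-stability version and one must separately argue that this unobserved part of $p$ remains bounded --- which follows from the near-scaling structure $p(t)\approx\overline{T}(t)\,p(0)$ inherited from Proposition~\ref{lemma:collision} --- and that the degenerate direction $\lambda_1=\lambda_2$ is handled by the convention of the Remark following Proposition~\ref{lemma:Eigenvector}. A secondary point, the presence of two small parameters, is disposed of by the cascade observation in Step~1.
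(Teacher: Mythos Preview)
Your proposal is correct and follows essentially the same route as the paper: cast \eqref{eq:controller_1}--\eqref{eq:controller_3} as a singularly perturbed system, verify that the boundary layer (the cascaded estimation errors $y_i=\hat p_i-(p_i-p_c)$ and $D_i=\hat C_i-(C_i-C)$) is exponentially stable via the consensus structure and Proposition~\ref{cor: C}, verify that the reduced system is exactly \eqref{eq:control_law_centrial} and hence exponentially stable by Theorem~\ref{pro:almost_global} since $C^*\succ0$ and $C(0)\succ0$, and conclude with Theorem~\ref{SF_theorem}. In fact you are more careful than the paper about several technical points---the set-valued nature of the slow equilibrium, the smoothness of the eigendecomposition away from $\lambda_1=\lambda_2$, and the role of $\varepsilon_{\text f}$ versus $\varepsilon_{\text s}$---which the paper's proof treats only informally.
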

\begin{proof}

	We first know that for the static case, i.e., $\dot p_i = 0, \forall i\in\mathcal{V}$; hence, $p_i, p_c, C_i$ and $C$ are constant values, then we have that $\hat p_i(t) \to (p_i - p_c)$ and $\hat C_i(t) \to (C_i - C)$ exponentially fast, respectively, as $t\to\infty$ since \eqref{eq:controller_1} and \eqref{eq:controller_2} are in cascade, i.e, the output $\hat p_i$ of \eqref{eq:controller_1} goes as input for \eqref{eq:controller_2} but \eqref{eq:controller_1} does not need any output from \eqref{eq:controller_2}. Also note that the cascade system \eqref{eq:controller_1}-\eqref{eq:controller_2} is in closed-loop with \eqref{eq:controller_3}, i.e., $e^i_{\{1,2\}}, z_i^c$ and $v_{\{1,2\}}$ are inputs to \eqref{eq:controller_3} derived from the outputs of \eqref{eq:controller_1}-\eqref{eq:controller_2}, and $p_i$ as output of \eqref{eq:controller_3} goes as input of \eqref{eq:controller_1}-\eqref{eq:controller_2}.  Note that $\varepsilon_f$ does not play a role for the asymptotical stability for only the cascade system \eqref{eq:controller_1}-\eqref{eq:controller_2}; however, it is practically convenient to set it up small enough to guarantee exponential stability {\cite[Theorem 11.4]{khalil2002nonlinear}}, i.e., the \emph{perturbation} for \eqref{eq:controller_2} vanishes faster.

    
    We note that the desired configuration $q^* := \left(\hat p_i = (p_i - p_c), \hat C_i(t) = (C_i - C), e^i_{\{1,2\}} = 0\right)$ is an isolated equilibrium point of the closed-loop system \eqref{eq:controller_1}-\eqref{eq:controller_3}. When we consider systems \eqref{eq:controller_1}-\eqref{eq:controller_3} at the root $\hat p_i = (p_i - p_c), \hat C_i(t) = (C_i - C)$, the system \eqref{eq:controller_3} will degenerate into the reduced system as \eqref{eq:control_law_centrial} and the origin $e^i_{\{1,2\}} = 0$ is exponentially stable according to Theorem \ref{pro:almost_global}. In addition, the boundary-layer system takes the form 
    \begin{equation}
        \begin{aligned}
            \varepsilon_{\text{f}} \frac{\text{d}y_i}{\text{d}\tau} = &-\sum_{j \in \mathcal{N}_i} y_i - y_j \\
             \frac{\text{d}D_i}{\text{d}\tau} = &-\sum_{j \in \mathcal{N}_i} (D_i - D_j - y_iy_i^\top + y_jy_j^\top - (p_i - p_c)y_i^\top 
             \\ &-y_i(p_i - p_c)^\top + (p_j - p_c)y_j^\top +y_j(p_j - p_c)^\top)
        \end{aligned}
    \end{equation}
    where $y_i = \hat p_i - (p_i - p_c)$,  $D_i = \hat C_i - (C_i - C)$ and the origin is also exponentially stable based on {\cite[Theorem 11.4]{khalil2002nonlinear}}. Therefore, the exponential stability conditions required by Theorem~\ref{SF_theorem} for both the reduced and boundary-layer systems are satisfied.  Consequently, there exist sufficiently small constants $\varepsilon_{\{\text{s}, \text{f}\}} > 0$ such that $q^*$ is a (exponentially) stable equilibrium of the closed-loop system \eqref{eq:controller_1}-\eqref{eq:controller_3}.
    
\end{proof}

Roughly speaking, we note that both \eqref{eq:controller_1} and \eqref{eq:controller_2} are first-order consensus algorithms; therefore, if the signal to be estimated is dynamic and bounded, the estimation error is also bounded depending on how fast the signal to be tracked varies \cite{kia2019tutorial}. Consequently, we have this \emph{spiral} of the smaller $\|\dot p_i(t) \|$, the smaller $\|\hat p_i(t) - (p_i(t) - p_c(t))\|$ and $\|\hat C_i(t) - (C_i(t) - C(t))\|$ as $t\to\infty$, for example, once the transitory dominated by the initial conditions has almost vanished, and that again makes much smaller $\|\dot p_i(t) \|$. In particular, we can identify the dynamics of the estimation errors derived from \eqref{eq:controller_1}-\eqref{eq:controller_2} as the fast system \eqref{slow_system} and the error system derived from \eqref{eq:controller_3} as the slow system \eqref{fast_system}.

Note that for the desired $C^* \succeq 0$ we do not have exponential convergence for the errors in \eqref{eq:controller_3} because of (\ref{eq: eco}) making the stability proof more complex. Intuitively and roughly speaking, the result of Theorem \ref{thm: supermain} can be seen as the errors norms $\|e^i_{\{1,2\}}(t)\|$ in \eqref{eq:controller_3} go smaller since their original exponential convergence to zero, then $\|\dot p_i(t)\|$ goes smaller and therefore the estimation errors of \eqref{eq:controller_1}-\eqref{eq:controller_2} become also smaller since the dynamic signal that they are tracking goes slower, making all the closed-loop system converge to their desired configuration.

\begin{remark}
    Our proposed method is also applicable to multi-agent systems with directed graphs as long as the centroid and covariance matrix estimation are done, for example, when the network is balanced \cite{bullo2018lectures}.
\end{remark}

{

\section{Numerical validation}

In this section, we present the numerical simulation shown in Figure \ref{fig: distributed} to validate Theorem \ref{thm: supermain}. The simulation begins with an initial configuration of $n=70$ agents uniformly distributed over the region $\left[-3, 3\right]\times \left[-1, 1\right]$. The desired eigenvalues of the covariance matrix are set to $\lambda_1^* = 10$ and $\lambda_2^* = 4$. The underlying undirected communication graph has an algebraic connectivity of $\lambda_0 = 1.39$. In order to show the resiliency of the algorithm,  groups of five agents \emph{die} at $t = 0.3$ T, $0.5$ T, and $1$ T, respectively, while maintaining the connectivity of the communication graph throughout. The simulation results demonstrate that setting $\varepsilon_{\text{f}} = 0.1$ and $\varepsilon_{\text{s}} = 0.25$ is sufficient for exponential convergence to the target eigenvalues, thus providing numerical validation of Theorem \ref{thm: supermain}. Furthermore, since the initial eigenvalues are smaller than the desired ones, the agents must increase the dispersion of the formation, effectively increasing all inter-agent distances. This behavior confirms the absence of collisions, in accordance with Proposition \ref{lemma:collision}. 

Moreover, to evaluate the robustness of the proposed dispersion formation control law in the presence of non-cooperative agents, we design the simulation in Figure \ref{fig: distributed_robustness} using the same basic setup as in Figure \ref{fig: distributed}. In contrast, four robots in this scenario are not subject to control, i.e., they contribute for the estimation of the centroid and covariance although they do not follow \eqref{eq:controller_3} but orbit the origin with different angular velocities. The simulation results demonstrate the robustness of our method, showing that the multi-agent system can still get close enough to the desired dispersion formation despite the \emph{disturbances} introduced by non-cooperative agents.}

\section{Conclusions}
In this paper, we have introduced the concept of \emph{dispersion formation control}, which extends traditional formation control from geometry to distribution, allowing for the modeling of \emph{non-strict} collective shapes. To achieve the desired dispersion, we propose two control strategies: centralized and decentralized. For the centralized strategy, we demonstrate the exponential convergence of the multi-agent system while ensuring several practical properties, such as centroid invariance and collision avoidance. Additionally, by introducing a centroid estimator, we successfully extend the centralized control law into a distributed framework and establish the exponential stability of this approach using singular perturbation theory with slow-fast dynamics. The simulation results provide numerical validation for the proposed strategies, and show the robustness of the \emph{swarm} against non-cooperative agents.




\bibliographystyle{IEEEtran}
\bibliography{ref}

\end{document}